\def\showcomments{0}
\documentclass[a4paper,USenglish,cleveref, numberwithinsect, autoref,pdfa, thm-restate]{lipics-v2021}

\pdfoutput=1 

\usepackage[utf8]{inputenc}

\usepackage{latexsym,amsmath,amsfonts,amssymb,stmaryrd,mathtools}
\usepackage{amsthm}
\usepackage{algorithm, algorithmicx}
\usepackage[noend]{algpseudocode}
\usepackage{graphicx}
\usepackage{hyperref}
\usepackage{listings, fancyvrb, multicol}
\usepackage{multirow}
\usepackage{comment}
\usepackage{xspace}
\usepackage{pifont}
\usepackage{parcolumns}
\usepackage{graphicx}
\usepackage{enumerate}
\usepackage{pgfplots}
\usetikzlibrary{patterns}

\newtheorem{invariant}[theorem]{Invariant}

\newcommand{\xparagraph}[1]{\subparagraph*{#1}}
\def\t{\textit}


\newcommand\rmv[1]{}
\def\broadcast{\t{broadcast}}
\def\deliver{\t{deliver}}

\def\scan{\t{scan}}

\def\prepare{\textsc{Prepare}\xspace}
\def\commit{\textsc{Commit}\xspace}
\def\vc{\textsc{ViewChange}\xspace}
\def\vcack{\textsc{ViewChangeAck}\xspace}
\newcommand{\neb}{Consistent Broadcast\xspace}
\newcommand{\rb}{Reliable Broadcast\xspace}
\newcommand{\combo}{Consistent and Reliable Broadcast\xspace}

\newcommand{\disk}{memory}


\definecolor{plum}{HTML}{75507b}	


\if\showcomments1

    \newcommand{\Naama}[1]{\noindent{\color{purple}{\textbf{Naama: }#1}}}
    \newcommand{\igor}[1]{\noindent{\color{brown}{\textbf{Igor: }#1}}}
    \newcommand{\mka}[1]{\noindent{\color{blue}{\textbf{MKA: }#1}}}
    \newcommand{\dalia}[1]{\noindent{\color{lipicsYellow}{\textbf{D: }#1}}}
    \newcommand{\ax}[1]{\noindent{\color{teal}{\textbf{Thn: }#1}}}
\else

     \newcommand{\Naama}[1]{}
     \newcommand{\igor}[1]{}
     \newcommand{\mka}[1]{}
     \newcommand{\dalia}[1]{}
     \newcommand{\ax}[1]{}
\fi

\lstset{basicstyle=\scriptsize\ttfamily, tabsize=2, escapeinside={@}{@}}
\lstset{literate={<}{{$\langle$}}1  {>}{{$\rangle$}}1}
\lstset{postbreak=\mbox{\textcolor{red}{$\hookrightarrow$}\space}}
\lstset{language=C, morekeywords={CAS,commit,empty,job,taken,entry,GOTO,bool}}
\lstset{xleftmargin=5.0ex, numbers=left, numberblanklines=false, frame=single, showstringspaces=false}
\makeatletter
\lst@Key{countblanklines}{true}[t]%
{\lstKV@SetIf{#1}\lst@ifcountblanklines}

\lst@AddToHook{OnEmptyLine}{%
	\lst@ifnumberblanklines\else%
	\lst@ifcountblanklines\else%
	\advance\c@lstnumber-\@ne\relax%
	\fi%
	\fi}
\makeatother


\title{Frugal Byzantine Computing\footnote{This paper is an extended version of the DISC 2021 paper.}}

\author{Marcos K. Aguilera}{VMware Research, Palo Alto, USA}{maguilera@vmware.com}{}{}
\author{Naama Ben-David}{VMware Research, Palo Alto, USA}{bendavidn@vmware.com}{}{}
\author{Rachid Guerraoui}{EPFL, Lausanne, Switzerland}{rachid.guerraoui@epfl.ch}{}{}
\author{Dalia Papuc}{EPFL, Lausanne, Switzerland}{dalia.papuc@epfl.ch}{}{}
\author{Athanasios Xygkis}{EPFL, Lausanne, Switzerland}{athanasios.xygkis@epfl.ch}{}{}
\author{Igor Zablotchi}{MIT, Cambridge, USA}{igorz@mit.edu}{}{}

\authorrunning{M.\,K. Aguilera, N. Ben-David, R. Guerraoui, D. Papuc, A. Xygkis, and I. Zablotchi} 

\Copyright{Marcos K. Aguilera, Naama Ben-David, Rachid Guerraoui, Dalia Papuc, Athanasios Xygkis, and Igor Zablotchi} 

\ccsdesc[500]{Theory of computation~Concurrent algorithms}
\ccsdesc[500]{Theory of computation~Distributed algorithms}
\ccsdesc[500]{Theory of computation~Design and analysis of algorithms}

\keywords{Reliable Broadcast, Consistent Broadcast, Consensus, Byzantine Failure, Message-and-memory} 

\nolinenumbers 

\hideLIPIcs  

\EventEditors{John Q. Open and Joan R. Access}
\EventNoEds{2}
\EventLongTitle{42nd Conference on Very Important Topics (CVIT 2016)}
\EventShortTitle{CVIT 2016}
\EventAcronym{CVIT}
\EventYear{2016}
\EventDate{December 24--27, 2016}
\EventLocation{Little Whinging, United Kingdom}
\EventLogo{}
\SeriesVolume{42}
\ArticleNo{23}
 
\begin{document}
\maketitle

\begin{abstract}
Traditional techniques for handling Byzantine failures are
  expensive: digital signatures are too costly,
  while using $3f{+}1$ replicas is uneconomical ($f$ denotes the maximum number of Byzantine processes).
We seek algorithms that reduce the
  number of replicas to $2f{+}1$ and minimize the number of signatures. While the first goal can be achieved 
  in the message-and-memory model, accomplishing the second goal simultaneously is challenging.
We first address this challenge for the problem of broadcasting messages reliably.
We consider two variants of this problem, \neb{} and \rb, typically considered very close. 
Perhaps surprisingly, we establish a separation between them in terms of signatures required.
In particular, we show that \neb{} requires at least 1 signature in some execution, while
   \rb{} requires $O(n)$ signatures in some execution.
We present matching upper bounds for both primitives 
within constant factors.
We then turn to the problem of consensus and argue that this separation matters for solving consensus with
  Byzantine failures: we present a practical consensus algorithm that uses \neb{} as its main communication primitive.
This algorithm works for $n=2f{+}1$ and avoids signatures
  in the common-case---properties that have not been simultaneously
  achieved previously.
Overall, our work approaches Byzantine computing in a frugal manner and motivates the use of \neb{}---rather than \rb---as
  a key primitive for reaching agreement.
%
\end{abstract}
\section{Introduction}

Byzantine fault-tolerant computing is  notoriously expensive.  To tolerate $f$ failures, we typically need  $n=3f+1$ replica processes. 
Moreover, the agreement protocols for synchronizing the replicas have a significant latency overhead. Part
of the overhead comes from network delays, but digital signatures---often used in Byzantine computing---are
even more costly than network delays.
For instance, signing a message can be 28 times slower than sending it over a low-latency Infiniband fabric
(Appendix~\ref{app:latency} shows the exact measurements).

In this work, we study whether Byzantine computing can be \emph{frugal}, meaning if it can use
  few processes and few signatures. 
By Byzantine computing, we mean the classical problems of broadcast and consensus.
By frugality, we first mean systems with $n=2f+1$ processes, where $f$ is the maximum number of Byzantine processes. Such  systems are clearly preferable to systems with $n=3f+1$, as they require 33--50\% less hardware. 
However, seminal impossibility results imply that in the standard message-passing model with $n=2f+1$ processes, neither consensus nor various forms of broadcast can be solved, even under partial synchrony or randomization~\cite{dwork1988consensus}.
To circumvent the above impossibility results, we consider a message-and-memory (M\&M) model, which allows processes to
both pass messages and share memory, capturing the latest hardware capabilities of enterprise
servers~\cite{aguilera2018passing,aguilera2019impact}. In this model, it is possible to solve consensus with $n=2f+1$ processes and partial synchrony~\cite{aguilera2019impact}.




Frugality for us also means the ability to achieve  \emph{low latency}, by minimizing the number of digital signatures used. 
Mitigating the cost of digital signatures is commonly done by replacing them with more computationally efficient schemes, such as message authentication codes (MACs). For instance, with $n=3f+1$, the classic PBFT replaces some of its signatures with MACs~\cite{pbft2002}, while Bracha's broadcast algorithm~\cite{bracha1987asynchronous} relies exclusively on MACs. As we show in the paper, when $n=2f+1$, the same techniques for reducing the number of signatures are no longer applicable.

The two goals---achieving high failure resilience while minimizing the number of signatures---prove challenging when combined.
Intuitively, this is because with $n=2f+1$ processes, two quorums may intersect only at a Byzantine process; this is not the case with $n=3f+1$. Thus, we cannot rely on quorum intersection alone to ensure correctness; we must instead restrict the behavior of Byzantine processes to prevent them from providing inconsistent information to different quorums. Signatures can restrict Byzantine processes from lying, but only if there are enough correct processes to exchange messages and cross-check information. The challenge is to make processes prove that they behave correctly, based on the information they received so far, while using as few signatures as possible.

We focus initially on the problem of broadcasting a message reliably---one of the simplest and most widely used primitives in distributed computing.
Here, a designated sender process $s$ would like to send a message to other processes, such that all correct processes deliver the same message.
The difficulty is that a Byzantine sender may try to fool correct processes to deliver different messages.
Both broadcast variants, \combo{}, ensure that (1) if the sender is correct, then all correct processes deliver its message, and (2) any two correct processes that deliver a message must deliver the \textit{same} message. 
\rb ensures an additional property: if any correct process delivers a message, then all correct processes deliver that message.

Perhaps surprisingly, in the M\&M model we show a large separation between the two broadcasts in terms of the number of signatures (by correct processes) they require.
We introduce a special form  
of indistinguishability argument for $n=2f+1$ processes that uses signatures and shared memory in an elaborate way.
With it, we prove lower bounds for deterministic algorithms. 
For \neb{}, we prove that any solution requires one correct process to sign in some execution, and provide an algorithm that matches this bound.
In contrast, for \rb, we show that any solution requires at least $n-f-2$ correct processes to sign in some execution.
We provide an algorithm for \rb{} based on our \neb{} algorithm which follows the well-known Init-Echo-Ready pattern~\cite{bracha1987asynchronous} and uses up to $n+1$ signatures, matching the lower bound within a factor of 2.

To lower the impact of signatures on the latency of our broadcast algorithms, we introduce the technique of background signatures.
Given the impossibility of completely eliminating signatures, we design our protocols such that signatures are not used in well-behaved executions, i.e., when processes are correct and participate within some timeout.
In other words, both broadcast algorithms generate signatures in the background and also incorporate a fast path where signatures are not used.






We next show how to use our \neb algorithm to improve consensus algorithms.
The algorithm is based on PBFT~\cite{castro1999practical}, and maintains \textit{views} in which one process is the \textit{primary}. Within a view, agreement can be reached by simply having the primary consistent-broadcast a value, and each replicator respond with a consistent broadcast. When changing views, a total of $O(n^2)$ calls to \neb{} may be issued. The construction within a view is similar to our \rb{} algorithm. Interestingly, replacing this part with the \rb{} abstraction does \textit{not} yield a correct algorithm; the stronger abstraction hides information that an implementation based on \neb{} can leverage. For the correctness of our algorithm, we rely on a technique called \textit{history validation} and on \textit{cross-validating} the view-change message. Our consensus algorithm has four features: (1) it works for $n=2f+1$ processes, (2) it issues no signatures on the fast path, (3) it issues $O(n^2)$ signatures on a view-change and (4) it issues $O(n)$ background signatures within a view. As far as we know, no other algorithm achieves all these features simultaneously.
This result provides a strong motivation for the use of \neb{}---rather than \rb---as a first-class primitive in the design of agreement algorithms.

To summarize, we 
quantify the impossibility of avoiding signatures by proving lower bounds on the number of signatures required to solve the two variants of the broadcast problem---\textit{Consistent} and \textit{\rb{}}---and provide algorithms that match our lower bounds. Also, we construct a practical consensus algorithm using the \neb{} primitive.
In this work, we consider the message-and-memory model \cite{aguilera2018passing,aguilera2019impact}, but our results also apply to the pure shared memory model: our algorithms do not require messages so they work under shared memory,
  while our lower bounds apply \emph{a fortiori} to shared memory.


\section{Related Work}\label{sec:related}





    

\xparagraph{Message-and-memory models.}
We adopt a message-and-memory (M\&M) model, which is a generalization of both message-passing and shared-memory.
M\&M is motivated by enterprise servers with the latest hardware capabilities---such as RDMA, RoCE, Gen-Z, and soon CXL---which allow machines 
to \emph{both} pass messages and share memory.
M\&M was introduced by Aguilera et al. in~\cite{aguilera2018passing}, and subsequently studied in several other works~\cite{aguilera2019impact,attiya2020optimal,hadzilacos2020optimal,raynal2019one}.  Most of these works did not study Byzantine fault tolerance, but focused on crash-tolerant constructions when memory is shared only by subsets of processes~\cite{aguilera2018passing,attiya2020optimal,hadzilacos2020optimal,raynal2019one}. In~\cite{aguilera2019impact}, Aguilera et al. consider crash- and Byzantine- fault tolerance, as well as bounds on communication rounds on the fast path for a variant of the M\&M model with dynamic access permissions and memory failures. However, they did not study any complexity bounds off the fast path, and in particular did not consider the number of signatures such algorithms require.


\xparagraph{Byzantine Fault Tolerance.}
Lamport, Shostak and Pease~\cite{lamport1982byzantine,pease1980reaching} show that Byzantine agreement can be solved in synchronous message-passing systems iff $n \geq 3f+1$. 
In asynchronous systems subject to failures, consensus cannot be solved~\cite{fischer1985impossibility}. However, this result is circumvented by making
additional assumptions for liveness, such as randomization~\cite{ben1983another, MostefaouiMR14} or partial synchrony~\cite{chandra1996unreliable,dwork1988consensus}.
Even with signatures, asynchronous Byzantine agreement can be solved in message-passing systems only if $n\geq 3f+1$~\cite{bracha1985asynchronous}. Dolev and Reischuk~\cite{DolevR85} prove a lower bound of $n(f+1)/4$ signatures for Byzantine agreement, assuming that every message carries at least the signature of its sender.

\xparagraph{Byzantine Broadcast.}
In the message-passing model, both \combo{} require $n\geq 3f+1$ processes, unless (1) the system is synchronous and (2) digital signatures are available~\cite{bracha1985asynchronous,Dolev82,SrikanthT87}.
\neb{} is sometimes called Crusader Agreement~\cite{Dolev82}. The \neb{} abstraction was used implicitly in early papers on Byzantine broadcast~\cite{BrachaT83, Toueg84}, but its name was coined later by Cachin et al. in~\cite{CKPS2001}. The name ``consistent broadcast'' may also refer to a similar primitive used in synchronous systems~\cite{Lynch96,SrikanthT87}. 
Our \rb{} algorithm shares Bracha's Init-Echo-Ready structure~\cite{bracha1987asynchronous} with other broadcast algorithms~\cite{bracha1985asynchronous,echoMulticast,SrikanthT87}, but is the first algorithm to use this structure in shared memory to achieve \rb{} with $n=2f+1$ processes.



\xparagraph{BFT with stronger communication primitives.}
Despite the known fault tolerance bounds for asynchronous 
Byzantine Failure Tolerance (BFT), Byzantine consensus can be solved in asynchronous systems with $2f+1$ processes if stronger communication mechanisms are assumed.
%
Some prior work solves Byzantine consensus with $2f+1$ processes using specialized trusted components that Byzantine processes cannot control~\cite{ChunMS08,ChunMSK07,CorreiaNV04,correia2010asynchronous,KapitzaBCDKMSS12,VeroneseCBLV13}.
These trusted components can be seen as providing a broadcast primitive for communication.
These works assume the existence of such primitives as black boxes, and do not study the cost of implementing them using weaker hardware guarantees, as we do in this paper.
We achieve the same Byzantine fault-tolerance by using the shared memory to prevent the adversary from partitioning correct processes: once a correct process writes to a register, the adversary cannot prevent another correct process from seeing the written value.

It has been shown that shared memory primitives can be useful in providing BFT if they have \emph{access control lists} or \emph{policies} that dictate the allowable access patterns in an execution~\cite{aguilera2019impact,alon2005tight,bessani2009sharing,bouzid2016necessary,malkhi2003objects}. Alon et al.~\cite{alon2005tight} provide tight bounds for the number of strong shared-memory objects needed to solve consensus with optimal resilience. They do not, however, study the number of signatures required.

%



\xparagraph{Early termination.}
The idea of having a \emph{fast path} that allows early termination in well-behaved executions is not a new one, and has appeared in work on both message-passing~\cite{aguilera2019impact,aguilera2020microsecond,aublin2015next,dobre2006one,keidar2001cost,kotla2007zyzzyva,lamport2006fast} and shared-memory~\cite{balmau2016fast,timnat2014practical} systems.
Most of these works measure the fast path in terms of the number of message delays (or network rounds trips) they require, but some also consider the number of signatures~\cite{aublin2015next}. In this paper, we show that a signature-free fast path does not prevent an algorithm from having an optimal number of overall signatures.

\section{Model and Preliminaries}\label{sec:model}


We consider an asynchronous message-and-memory model, which allows processes to use
  both message-passing and shared-memory~\cite{aguilera2018passing}.
The system has $n$ processes $\Pi= \{p_1,\ldots, p_{n}\}$
  and a shared \emph{\disk} $M$.
Throughout the paper, the term \disk{} refers to $M$, not to the local state of processes.
We sometimes augment the system with eventual synchrony
   (\S\ref{sec:model:consensus}).

\xparagraph{Communication.} The memory consists of single-writer multi-reader (SWMR) read/write atomic registers. Each process can read all registers, and has access to an unlimited supply of registers it can write. If a process $p$ can write to a register $r$, we say that $p$ \emph{owns} $r$. 
This model is a special case of access control lists (ACLs)~\cite{malkhi2003objects}, and of dynamically permissioned memory~\cite{aguilera2019impact}. 
Additionally, every pair of processes $p$ and $q$ can send messages to each other over links that satisfy the \emph{integrity} and \emph{no-loss} properties.   Integrity requires that a message $m$ from $p$ be received by $q$ at most once
  and only if $m$ was previously sent by $p$ to $q$.
No-loss requires that a message $m$ sent from $p$ to $q$ be eventually received by $q$.



\xparagraph{Signatures.}
Our algorithms assume digital
signatures: each process has access to primitives to \textit{sign} and \textit{verify} signatures. A process $p$ may sign a value $v$, producing $\sigma_{p,v}$; we drop the subscripts when it is clear from context. Given $v$ and $\sigma_{p,v}$, a process can verify whether $\sigma_{p,v}$ is a valid signature of $v$ by $p$.

\xparagraph{Failures.} 
Up to $f$ processes may fail by becoming Byzantine, where $n=2f+1$.
Such a process can deviate
  arbitrarily from the algorithm, but cannot write on a register that is not its own, and cannot forge the signature of a correct process. As usual, Byzantine processes can collude, e.g., by using side-channels to communicate.
The memory $M$ does not fail; such a reliable memory is implementable
  from a collection of fail-prone memories~\cite{aguilera2019impact}.
We assume that these individual memories may only fail by crashing.
  


\subsection{Broadcast}
We consider two broadcast variants: \neb{}~\cite{RachidBook,CKPS2001} and \rb~\cite{Bracha84,RachidBook}. In both variants, broadcast is defined in terms of two primitives: $\broadcast(m)$ and $\deliver(m)$. A designated \emph{sender} process $s$ is the only one that can invoke $\broadcast$. When $s$ invokes $\broadcast(m)$ we say that \emph{$s$ broadcasts $m$}. When a process $p$ invokes $\deliver(m)$, we say that \emph{$p$ delivers $m$}.

\begin{definition}\label{def:neb}
\emph{\neb{}} has the following properties:
\begin{description}
    \item [Validity] If a correct process $s$ broadcasts $m$, then every correct process eventually delivers $m$.
    \item [No duplication] Every correct process delivers at most one message.
    \item [Consistency] If $p$ and $p'$ are correct processes, $p$ delivers $m$, and $p'$ delivers $m'$, then $m{=}m'$. 
    \item [Integrity] If some correct process delivers $m$ and $s$ is correct, then $s$ previously broadcast $m$.
\end{description}
\end{definition}

\begin{definition}\label{def:rb}
\emph{\rb{}} has the following properties:
\begin{description}
    \item [Validity, No duplication, Consistency, Integrity] Same properties as in Definition~\ref{def:neb}.
    \item [Totality] If some correct process delivers $m$, then every correct process eventually delivers a message.
\end{description}
\end{definition}

We remark that both broadcast variants behave the same way when the sender is correct and broadcasts $m$.
However, when the sender is faulty \neb{} has no delivery guarantees for correct processes, i.e., some correct processes may deliver $m$, others may not.
In contrast, \rb{} forces every correct process to eventually deliver $m$ as soon as one correct process delivers $m$.

\subsection{Consensus}\label{sec:model:consensus}
\begin{definition}
\emph{Weak Byzantine agreement~\cite{Lam83}} has the following properties:
\begin{description}
	\item [Agreement] If correct processes $i$ and $j$ decide $val$ and $val'$, respectively, then $val = val'$.
	\item [Weak validity]  If all processes are correct and some process decides $val$, then $val$ is the input of some process.
    \item [Integrity] No correct process decides twice.
    \item [Termination] Eventually every correct process decides.
\end{description}
\end{definition}

Our consensus algorithm (\S\ref{sec:consensus}) satisfies agreement, validity, and integrity under asynchrony, but requires eventual synchrony for termination.
That is, we assume that for each execution there exists a \textit{Global Stabilization Time (GST)}, unknown to the processes, such that from GST onwards there is a known bound $\Delta$ on communication and processing delays.

\section{Lower Bounds on Broadcast Algorithms}\label{sec:lowerbounds}

We show lower bounds on the number of signatures required to solve \combo{} with $n=2f+1$ processes in our model. We focus on signatures by correct processes because Byzantine processes can behave arbitrarily (including signing in any execution). 

\subsection{High-Level Approach}\label{sec:approach}

Broadly, we use indistinguishability arguments that
  create executions $E_v$ and $E_w$ that deliver different messages
  $v$ and $w$; then we create a composite execution $E$
  where a correct process cannot distinguish $E$ from $E_v$, while
  another correct process cannot distinguish $E$ from $E_w$, so they deliver
  different values, a contradiction.
  Such arguments are common in message-passing system, where the adversary can prevent communication by
  delaying messages between correct processes.
  However, it is not obvious how to construct this argument in shared memory, as the adversary cannot
  prevent communication via the shared memory, especially when using single-writer registers that cannot be overwritten by the adversary.
  Specifically, if correct processes write their values and read all registers, then for any two correct processes,
  at least one sees the value written by the other~\cite{bendavid2021unidir}. 
  So, when creating execution $E$ in which, say $E_v$ occurs first, processes executing $E_w$ will 
  know that others executed $E_v$ beforehand. 


  We handle this complication in two ways, depending on whether the sender signs its broadcast message. If the sender does not sign, we argue that processes executing $E_w$ cannot tell whether $E_v$ was executed by correct or Byzantine processes, and must therefore still output their original value $w$. This is the approach in the lower bound proof for \neb (Lemma~\ref{lem:onesig}).
  
  However, once a signature is produced, processes can save it in their memory to prove to others that they observed a valid signature. Thus, if the sender signs its value, then processes executing $E_w$ cannot be easily fooled; if they see two different values signed by the sender, then the sender is provably faulty, and correct processes can choose a different output. 
  So, we need another way to get indistinguishable executions. 
  We rely on a \emph{correct bystander} process.
  We make a correct process $b$ in $E$ sleep until all other correct processes decide. Then $b$ wakes up and observes that $E$ is a composition of $E_v$ and $E_w$. While $b$ can recognize that $E_v$ or $E_w$ was executed by Byzantine processes, it cannot distinguish which one. So $b$ cannot reliably output the same value as other correct processes. 
  We use this construction for \rb{}, but we believe it applies to other agreement problems in which all correct processes must decide.
  
  The proof is still not immediate from here. In particular, since $f {<} n/2$, correct processes can wait until at least $f{+}1$ processes participate
  in each of $E_v$ and $E_w$. Of those, in our proof
  we assume at most $f{-}1$ processes sign values. Since we need a bystander later, only $2f$ processes can participate. Thus, the sets executing $E_v$ and $E_w$ overlap at two processes; one must be the sender, to force decisions in both executions. Let $p$ be the other process and $S_v$ and $S_w$ be the set that execute $E_v$ and $E_w$ respectively, without the sender and $p$. Thus, $|S_v| = |S_w| = f{-}1$. 
  
  The key complication is that if $p$ signs its values in one of these two executions, we cannot compose them into an execution $E$ in which the bystander $b$ cannot distinguish which value it should decide. To see this, assume without loss of generality that $p$ signs a value in execution $E_w$. To create $E$, we need the sender $s$ and the set $S_w$ to be Byzantine. The sender will produce signed versions of both $v$ and $w$ for the two sets to use, and $S_w$ will pretend to execute $E_w$ even though they observed that $E_v$ was executed first. Since $|S_w| + |\{s\}| = f$, all other processes must be correct. In particular, $p$ will be correct, and will not produce the signature that it produces in $E_w$. Thus, the bystander $b$ will know that $S_v$ were correct. 
  More generally, the problem is that, while we know that at most $f-1$ processes sign, we do not know \emph{which} processes sign. A clever algorithm 
  can choose signing processes to defeat the indistinguishability argument---in our case, this happens if $p$ is a process that signs.
  
  Due to this issue, we take a slightly different approach for the \rb{} lower bound, first using the bystander construction to show 
  that any \rb{} algorithm must produce \emph{a single non-sender} signature. To strengthen this to our bound, we construct an execution in which this signature needs to be repeatedly produced. To make this approach work, we show not just that \emph{there exists} an execution in which a non-sender signature is produced, but that \emph{for all} executions of a certain form, a non-sender signature is produced. This change in quantifiers requires care in the indistinguishability proof, and allows us to repeatedly apply the result to construct a single execution that produces many signatures.

\rmv{  
There are two noteworthy aspects of our indistinguishability arguments.
First, there is a shared memory, which is accessible to all
  processes, making it easier for processes to distinguish executions,
  thus complicating the proof.
Second, indistinguishability arguments typically create the composite execution
  by copy-and-pasting the state from other simpler executions, but we must be 
  careful when doing that with signatures: if a process $p$ observes a signature by 
  process $q$ in some execution $E$, we cannot copy-and-paste $p$'s state into
  another execution $E'$ unless $q$ produces the same signature
  in $E'$.
  }

\rmv{
    \begin{lemma}\label{lem:onesig}
    Any algorithm for \neb{} on RDMA with $n = 2f+1$ processes must use at least one signature in some execution.
    \end{lemma}
    
    \begin{proof}
    Assume by contradiction that there exists an algorithm $A$ that solves \neb{} on RDMA in a system with $n = 2f+1$ processes, and does not use any signatures. 
    Partition the processes in the system into 4 subsets: the sender $s$, a process $p$, $S_1$, $S_2$, where $\vert S_1 \vert = f-1$, $\vert S_2 \vert = f$.
    Consider the following executions.
    
    \textsc{Execution 1.} Processes in $s$, $p$, and $S_1$ are correct, and processes in $S_2$ are faulty, and do not participate. The sender $s$ broadcasts $v$. Then, since the sender is correct and $A$ is a correct algorithm, eventually processes in $s$, $p$, and $S_1$ must deliver $v$. 
    
    \textsc{Execution 2.} Processes in $s$, $p$, and $S_1$ are correct, and processes in $S_2$ are faulty, and do not participate. The sender $s$ broadcasts $w$. Then, since the sender is correct and $A$ is a correct algorithm, eventually processes in $s$, $p$, and $S_1$ must deliver $w$. 
    
    \textsc{Execution 3.} $p$ is faulty, processes in $s$, $S_1$ and $S_2$ are correct. Processes in $S_2$ are delayed at first.
    Initially, the execution proceeds identically to Execution 2, except that $p$ does not send any messages to processes in $S_2$.
    Eventually, $s$ and processes in $S_1$ must deliver $v$.
    Afterwards, $p$ changes its state to be identical to its state at the end of Execution 1; $p$ also sends the same messages as it did in Execution 1. This is possible since no signatures were produced in Execution 1.
    At this point, processes in $S_2$ wake up and must deliver $w$, since the sender is correct and has broadcast $w$.
    \Naama{I don't understand this execution. Seems to already have a contradiction in it, since processes deliver different values.}
    
    \textsc{Execution 4.} Processes in $S_1$ and $s$ are faulty, 
    the rest are correct. Processes in $S_2$ are delayed at first.
    Initially, the execution proceeds identically to Execution 1, except that $S_1$ and $s$ do not send any messages to processes in $S_2$.
    Eventually, $p$ must deliver $v$, as it did in Execution 1. We pause $p$ immediately afterwards.
    Then, $s$ and $S_1$ change their state to be identical to their state at the end of Execution 2; they also send the same messages they sent in Execution 3. This is possible since no signatures were produced in Executions 2 and 3.
    At this point, processes in $S_2$ wake up.
    This execution is indistinguishable to processes in $S_2$ from Execution 3; thus, they must deliver $w$, which contradicts consistency, since $p$ delivered $v$.
    \Naama{Note that in all of these executions, $s$ and $S_1$ were acting in the same way. Why can't we combine them and go back to the previous proof?}
    \end{proof}
}

\subsection{Proofs}


In all proofs in this section, we denote by $s$ the designated sender process in the broadcast protocols we consider.
We first show that \neb{} requires at least one signature. 

\begin{lemma}\label{lem:onesig}
Any algorithm for \neb{} in the M\&M model with $n=2f+1$ and $ f\geq 1$ has an execution in which at least one correct process signs.
\end{lemma}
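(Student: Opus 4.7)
I will proceed by contradiction. Suppose there is an algorithm $A$ that solves \neb{} with $n=2f+1$, $f\geq 1$, and has no execution in which any correct process signs. The plan is to construct three executions: two simple ones, $E_v$ and $E_w$, and a composite $E$ in which two correct processes deliver different values, violating Consistency.

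Partition $\Pi$ into four disjoint groups: the sender $s$, a distinguished process $p$, a set $S_1$ of size $f-1$, and a set $S_2$ of size $f$ (this is possible because $f\geq 1$). Execution $E_v$ has $s, p, S_1$ correct, $S_2$ Byzantine and silent, and $s$ broadcasts $v$; by Validity, $p$ eventually delivers $v$. Execution $E_w$ has $s, S_2$ correct, $p$ and $S_1$ Byzantine and mimicking the code that correct $p$ and $S_1$ run in $E_v$ (which they can do since by assumption no signatures are produced and every process writes only to its own single-writer registers), and $s$ broadcasts $w\neq v$; by Validity, every process in $S_2$ must deliver $w$.

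The composite $E$ makes $s$ and $S_1$ Byzantine and $p$ and $S_2$ correct. In phase~1 the scheduler steps only $s, p, S_1$, which together reproduce $E_v$ exactly, so correct $p$ delivers $v$. In phase~2, $s$ overwrites its own single-writer registers with the content they would hold in $E_w$ and sends to $S_2$ the same $w$-broadcast messages that correct $s$ sends in $E_w$, while $S_1$ continues writing and sending exactly what it did in $E_v$; finally $S_2$ is scheduled. The key check is that every read and every received message of $S_2$ in $E$ agrees with its view in $E_w$: $s$'s visible register state and outgoing messages match (because $s$ rewrote its registers and replays $E_w$ toward $S_2$), and $p$'s and $S_1$'s visible register state and outgoing messages match (because $p$'s correct $E_v$-behavior is exactly what Byzantine $p$ mimics in $E_w$, and similarly for $S_1$). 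Hence $S_2$ delivers $w$ in $E$ as it must in $E_w$, contradicting $p$'s delivery of $v$.

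The delicate step, and the reason this argument is subtle in shared memory rather than pure message passing, is the overwriting of $s$'s registers at the start of phase~2: it is legal because registers are single-writer and $s$ is Byzantine, and it is harmless because $p$ has already irrevocably delivered $v$ (No duplication) before any overwriting occurs, so any subsequent read by $p$ is immaterial. The whole argument pivots on the no-signature assumption: if $s$ were forced to sign, then $S_2$ could detect the incompatible $v$- and $w$-histories signed by $s$ and refuse to output $w$, breaking the indistinguishability; it is exactly this point that the contradiction refutes.
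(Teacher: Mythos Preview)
Your argument is essentially correct and follows the same three-execution indistinguishability structure as the paper's proof. The partitions differ only cosmetically: the paper groups $s$ inside its set $S_1$ (so $|S_1|=f$) and makes only $p$ Byzantine in the middle execution, whereas you separate $s$ out and make both $p$ and $S_1$ Byzantine in $E_w$. The upshot is the same: in the composite, $S_1$'s shared-memory state need not be reset (it already matches what Byzantine $S_1$ writes in $E_w$), whereas the paper resets all of $S_1$'s registers to initial values and replays the $w$-execution from scratch. Both work.

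There is one small gap you should close. In your composite $E$, phase~1 ``reproduces $E_v$ exactly,'' which means the (Byzantine) sender $s$ may have sent $v$-related messages toward $S_2$ during that phase. Those messages are still in flight when $S_2$ is finally scheduled in phase~2, yet they do not appear in $E_w$ (where $s$ is correct and only ever broadcasts $w$). Hence your claim that ``every received message of $S_2$ in $E$ agrees with its view in $E_w$'' is not justified as written. The fix is exactly what the paper makes explicit: since $s$ and $S_1$ are Byzantine in $E$, have them withhold all messages to $S_2$ during phase~1. This is invisible to the correct process $p$ (which cannot observe whether $s$ sent to $S_2$), so $p$ still delivers $v$, and it removes the stray $v$-messages from $S_2$'s view. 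Alternatively, the scheduler can simply delay those phase-1 messages past the point where $S_2$ delivers $w$. Either way, once you add this sentence, the indistinguishability goes through.
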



\begin{proof}
By contradiction, assume there is some algorithm $A$ for \neb in the M\&M model with $n=2f+1$ and $f\geq1$ without any correct process signing. 
Partition processes in $\Pi$ into 3 subsets: $S_1$, $S_2$, and $\{p\}$, where $S_1$ contains the sender, $\vert S_1 \vert = f$, $\vert S_2 \vert = f$, and $p$ is a single process.
Let $v,w$ be two distinct messages.
Consider the following executions.

\textsc{Execution $E_\textsc{clean-v}$.} Processes in $S_1$ and $p$ are correct
   (including the sender $s$), while processes in $S_2$ are faulty and never take a step.
Initially, $s$ broadcasts $v$. 
Since $s$ is correct, processes in $S_1$ and $p$ eventually deliver $v$. By our assumption that correct processes never sign, processes in $S_1$ and $p$ do not sign in this execution; processes in $S_2$ do not sign either, because they do not take any steps.

\textsc{Execution $E_\textsc{dirty-w}$.} Processes in $S_1$ and $S_2$ are correct 
  but $p$ is Byzantine.
Initially, $p$ sends all messages and writes to shared memory as it did in
  $E_\textsc{clean-v}$ (it does so without following its algorithm; $p$ is able to do this since no process signed in $E_\textsc{clean-v}$).
Then, the correct sender $s$ broadcasts $w$ and processes in $S_1$ and $S_2$ 
  execute normally, while $p$ stops executing.
Then, by correctness of the algorithm, eventually all correct processes deliver $w$. 
By our assumption that correct processes never sign, processes in $S_1$ and $S_2$ do not sign in this execution; $p$ does not sign either, because it acts as it did in $E_\textsc{clean-v}$.

\textsc{Execution $E_\textsc{bad}$.} Processes in $S_1$ are Byzantine, 
while processes in $S_2$ and $p$ are correct.
Initially, processes in $S_2$ sleep, while processes in $S_1$ and $p$
  execute, where processes in $S_1$ send the same messages to $p$ and write the same 
  values to shared memory as in $E_\textsc{clean-v}$ (but they do not send 
  any messages to $S_2$), so that from $p$'s perspective the
  execution is indistinguishable from $E_\textsc{clean-v}$. $S_1$ are able to do this because no process signed in $E_\textsc{clean-v}$.
Therefore, $p$ eventually
  delivers $v$. 
Next, processes in $S_1$ write the initial values to
  their registers\footnote{Recall
      that registers are single-writer. By ``their registers'',
      we mean the registers to which the processes can write.}.
Now, process $p$ stops executing, while processes in $S_1$ and $S_2$ execute
  the same steps as in $E_\textsc{dirty-w}$---here, note that
  $S_2$ just follows algorithm $A$ while $S_1$ is Byzantine and pretends to
  be in an execution where $s$ broadcasts $w$ ($S_1$ is able to do this because no process signed in $E_\textsc{dirty-w}$).
Because this execution is indistinguishable from $E_\textsc{dirty-w}$ to 
  processes in $S_2$, they eventually deliver $w$.
At this point, correct process $p$ has delivered $v$ while processes in $S_2$
  (which are correct) have delivered $w$, which contradicts the consistency
  property of \neb.
\end{proof}

An algorithm for \rb works for \neb, so Lemma~\ref{lem:onesig} also applies to \rb.

\medskip

We now show a separation between \neb{} and \rb{}: 
any algorithm for \rb{} has an execution where at least $f{-}1$ correct processes sign.




\rmv{
\begin{lemma}\label{lem:manydontsign}
Let $A$ be an algorithm solving \rb{} on RDMA with $n=2f+1$ processes,
such that no execution of $A$ has all $n$ processes signing.
In all executions of $A$, at most $f$ correct processes sign.
\end{lemma}

\begin{proof}
If there is an execution of $A$ where $f+1$ or more correct processes
  sign, we can extend it so that all $2f+1$ processes sign by
  having Byzantine processes sign---a contradiction.
\end{proof}
}
\rmv{
    \begin{proof}
    Assume by contradiction that this there is an algorithm $A$ solving \rb{} on RDMA with $n=2f+1$ processes, such that no execution of $A$ has all $n$ processes signing, but there is some execution $E$ of $A$ in which at least $f+1$ correct processes sign. Let $S$ be the set of correct processes that sign in $E$; by assumption $|S|\geq f+1$.
    
    Now consider the following extension $E'$ of $E$, in which all processes in $\Pi\setminus S$ are Byzantine (this is possible, since $|\Pi\setminus S| \leq f$). All processes execute $E$, with the Byzantine processes behaving correctly, until all processes in $S$ sign some value. Then have all processes in $\Pi\setminus S$ sign some value as well. In this execution, all $n$ processes signed, contradicting our assumption on $A$.
    %
    \end{proof}
}
The proof for the \rb{} lower bound has two parts. First, we show that intuitively there are many executions in which some process produces a signature:
if $E$ is an execution in which 
  (1) two processes never take steps,
  (2) the sender is correct, and 
  (3) processes fail only by crashing, 
  then some non-sender process signs. This is the heart of the proof, and relies on the indistinguishability arguments discussed in Section~\ref{sec:approach}. 
  Here, we focus only on algorithms 
  in which at most $f$ correct processes sign, otherwise the algorithm
  trivially satisfies our final theorem.


\rmv{
Intuitively, the proof shows that if processes do not produce enough signatures, then we can construct an execution in which two processes go to sleep initially, while a Byzantine sender broadcasts $w$ and convinces correct processes
  excluding the sleeping ones to deliver $w$; then Byzantine processes
  change their state and memories to an execution where the sender broadcast $v$;
  then, we wake up the two sleeping processes, and the Byzantine processes
  convince those two processes to deliver $v$---a contradiction. }

\rmv{
  if this is not the case, 
  then we can construct an execution in which processes that were asleep at the beginning of the execution find themselves in a situation where either (a) a set $S$ could be Byzantine and the rest of the processes, in set $T$, want to deliver $v$ or (b) $T$ could be Byzantine and processes in $S$ want to deliver $w\neq v$. These processes do not know what to do.
}

\begin{lemma}\label{lem:atleastonesig}
Let $A$ be an algorithm for \rb{} in the M\&M model with $n=2f+1$ and $f\geq 2$ processes,
such that in any execution at most $f$ correct processes sign.
In all executions of $A$ in which at least $2$ processes crash initially, 
processes fail only by crashing, and the sender is correct, at least one correct non-sender process signs.
\end{lemma}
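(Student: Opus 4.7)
My plan is to argue by contradiction along the bystander-style indistinguishability outlined in Section~\ref{sec:approach}. Assume there is some execution of $A$ satisfying the hypothesis (at least two initial crashes, crash-only failures, correct sender) in which no correct non-sender signs. I will combine this assumption with the standing hypothesis that at most $f$ correct processes sign in any execution of $A$ to build a composite execution in which a correct bystander is forced to deliver but cannot reliably decide between two distinct messages, contradicting the combination of totality and consistency of \rb{}.

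First, I would partition the $n=2f+1$ processes as $\{s,p,b\}\cup S_v\cup S_w$ with $|S_v|=|S_w|=f-1$, where $s$ is the sender and $b$ is the designated bystander. Let $v\ne w$ be two distinct messages. Define $E_v$ as the execution where the $f+1$ processes in $\{s,p\}\cup S_v$ participate correctly, the $f$ processes in $\{b\}\cup S_w$ crash initially, and $s$ broadcasts $v$; define $E_w$ symmetrically with $S_v$ and $S_w$ swapped and $s$ broadcasting $w$. Both satisfy the lemma's hypothesis (since $f\ge 2$, both exceed the two-crash threshold), and by the contradiction assumption I may choose the partition and broadcast values so that $E_v$ has no correct non-sender signature. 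I would then define a composite execution $E^*$ with Byzantine set $\{s\}\cup S_w$ (of size exactly $f$) and correct set $\{p,b\}\cup S_v$; the bystander $b$ is scheduled to sleep until every other correct process has delivered. In Phase~1 of $E^*$, $s$ broadcasts $v$ and $S_w$ plays dead, so $p$ and $S_v$ see an execution indistinguishable from $E_v$ (possible precisely because no correct non-sender signed in $E_v$, so no correct signatures need to be forged), and they deliver $v$. In Phase~2, $s$ and $S_w$ produce the writes, messages, and signatures that $\{s\}\cup S_w$ would have produced as correct processes in $E_w$; since both are Byzantine they can freely generate any such signatures. Finally $b$ wakes up. I would then compare $b$'s view of $E^*$ against a symmetric composite $E^{*\prime}$ in which the Byzantine and correct roles of $S_v$ and $S_w$ are swapped and $s$ truly broadcasts $w$, and argue that $b$'s view is identical in both. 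Totality forces $b$ to deliver in both; consistency then forces the delivered value to be $v$ in $E^*$ (since correct $p$ and $S_v$ delivered $v$) and $w$ in $E^{*\prime}$ (since correct $p$ and $S_w$ delivered $w$), the desired contradiction.

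The main obstacle, as flagged in Section~\ref{sec:approach}, is \emph{pivot compatibility}: the process $p$ is correct in both $E^*$ and $E^{*\prime}$, so its observable behavior (writes and messages visible to $b$) must match between the two scenarios for $b$'s view to be indistinguishable, yet in $E^*$ the pivot observes an $E_v$-like environment and in $E^{*\prime}$ it observes an $E_w$-like environment, which a priori produce different traces. The contradiction hypothesis---no correct non-sender signature in $E_v$---is exactly the property that lets the Byzantine impersonation of $p$'s observations proceed without forging correct signatures; I would have to argue carefully that the remaining (non-signature) aspects of $p$'s behavior in $E_w$ can likewise be simulated by the Byzantine $S_v$ in $E^{*\prime}$'s Phase~1, and that $b$'s totality requirement cannot be discharged by reading $p$ alone. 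Pinning down this simulation is the delicate step, and it is also what forces the stronger ``for all executions'' quantifier highlighted in Section~\ref{sec:approach}, since the indistinguishability argument needs no-signature versions of both $E_v$ and $E_w$ at once.
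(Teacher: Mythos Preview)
The pivot-compatibility obstacle you flag is not a detail to be ``pinned down later''---it is fatal to the symmetric construction. In both $E^*$ and $E^{*\prime}$ the process $p$ is correct; it writes to its own SWMR registers based on an $E_v$-environment in one composite and an $E_w$-environment in the other, so its register contents differ and no Byzantine process can overwrite them. The bystander $b$ therefore distinguishes $E^*$ from $E^{*\prime}$ through $p$'s slots alone, and the indistinguishability collapses. Your closing remark that you would need signature-free versions of \emph{both} $E_v$ and $E_w$ diagnoses a related gap: the contradiction hypothesis supplies only one such execution, and you never invoke the standing ``at most $f$ sign'' hypothesis to constrain the other.

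The paper circumvents both issues by an asymmetric construction in which the partition is determined \emph{after} observing the algorithm's signing behaviour, and the pivot is never correct on both sides. Starting from the assumed signature-free $E_{\textsc{clean-v}}$ (with a two-element initially-crashed set $D$ and active set $C=\Pi\setminus D$), the paper runs a second clean execution $E_{\textsc{clean-w}}$ on the same $C$ and applies the ``at most $f$ correct processes sign'' hypothesis \emph{there} to define $S\subset C$ as the signers in $E_{\textsc{clean-w}}$ (padded to size $f$) and $T=C\setminus S$. The two composites then swap which of $S,T$ is Byzantine. In $E_{\textsc{mixed-v}}$, Byzantine $S$ lets correct $T$ run $E_{\textsc{clean-v}}$ until some $p_t\in T$ delivers $v$, then rewrites its own state to match the end of $E_{\textsc{clean-w}}$ (no forgery needed, since $T$ signed nothing there by construction); totality forces the awakened $D$ to deliver $v$. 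In $E_{\textsc{bad}}$, Byzantine $T\cup\{s\}$ lets correct $S$ run $E_{\textsc{clean-w}}$ and deliver $w$, then rewrites its own state to match $E_{\textsc{mixed-v}}$ at the switchover (no forgery needed, since no non-sender signed in $E_{\textsc{clean-v}}$). The bystanders $D$ cannot distinguish $E_{\textsc{bad}}$ from $E_{\textsc{mixed-v}}$ and deliver $v$, contradicting consistency with correct $S$'s delivery of $w$---all within the single execution $E_{\textsc{bad}}$. Crucially, $p_t\in T$ is Byzantine in $E_{\textsc{bad}}$ and therefore free to replay its own $E_{\textsc{mixed-v}}$ trace in its own registers; this is precisely what dissolves the pivot obstruction your approach runs into.
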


\def\executionOne{$E_\textsc{clean-v}$\xspace}
\def\executionTwo{$E_\textsc{clean-w}$\xspace}
\def\executionThree{$E_\textsc{mixed-v}$\xspace}
\def\executionFour{$E_\textsc{bad}$\xspace}

\begin{proof}
By contradiction, assume some algorithm $A$ satisfies the conditions of the lemma, but there is some execution of $A$ where the sender $s$ is correct, processes fail only by crashing,
and at least $2$ processes crash initially,
but no correct non-sender process signs.
Let \executionOne be such an execution, $D$ be a set with two processes that crash initially in \executionOne\footnote{If more than two processes crashed initially, pick any two arbitrarily.}, $C = \Pi \setminus D$,
and $v$ be the message broadcast by $s$ in \executionOne{}.
Consider the following executions:

\textsc{Execution \executionTwo.} The sender $s$ broadcasts some message $w\not=v$, $D$ crashes initially, and $C$ is correct. Since $s$ is correct, eventually all correct processes deliver $w$. 
By assumption, at most $f$ processes sign.
Let $S \subset C$ contain all processes that sign, augmented with any other processes so that $|S|=f$.
Let $T=C\setminus S$.
Note that (1) $|T|=f-1$ and (2) if $s$ signed, then $s \in S$, otherwise $s \in T$.

\textsc{Execution \executionOne.} This execution was defined above (where $s$ broadcasts $v$). Since $s$ is correct, eventually all correct processes deliver $v$.
At least one process in $T$ is correct---call it $p_t$---since processes in $D$ are 
  faulty and there are at least $f+1$ correct processes. 
Note that $p_t$ delivers $v$.
We refer to $p_t$ in the next execution.

\textsc{Execution \executionThree.} Processes in $S$ are Byzantine and 
  the rest are correct.
Initially, the execution is identical to \executionOne, except that
  (1) processes in $D$ are just sleeping not crashed, and 
  (2) processes in $S$ do not send messages to processes in $D$
     (this is possible because processes in $S$ are Byzantine).
The execution continues as in \executionOne until $p_t$ 
  delivers $v$. 
Then, 
processes in $S$ misbehave (they are Byzantine) 
  and do three things:
(1) they change their states to what they were at the end of \executionTwo (this is possible because no process in $T$ signed in \executionTwo),
(2) they write to their registers in shared memory
   the same last values that they wrote in \executionTwo, and
(3) they send the same messages they did in \executionTwo.
Intuitively, processes in $S$ pretend that $s$ broadcast $w$.
Let $t$ be the time at this point; we refer to time $t$
  in the next execution.
Now, we pause processes in $S$ and let all other processes execute, including $D$ which had been sleeping.
Since $p_t$ delivered $v$ and processes in $D$ are correct, they eventually deliver $v$ as well.

\textsc{Execution \executionFour.} 
Processes in $T\cup \lbrace s \rbrace$ are Byzantine and 
  the rest are correct.
Initially, the execution is identical to \executionTwo, except that
  (1) processes in $D$ are sleeping not crashed, and
  (2) processes in $T\cup \lbrace s \rbrace$ 
do not send messages to processes in $D$.
Execution continues as in \executionTwo until processes
   in $S$ (which are correct) deliver $w$.
Then, processes in $T\cup \{s\}$ misbehave and do three things:
(1) they change their states to what they were in \executionThree
    at time $t$---this is possible 
    because in \executionOne (and therefore in all values and messages they had by time $t$ in \executionThree), 
    no non-sender process signed, and in particular, there 
    were no signatures by any process in $S \setminus \{s\}$;
(2) they write to the registers in shared memory
    the same values that they have in \executionThree at time $t$; and
(3) they send all messages they did in \executionThree up to time $t$.
Intuitively, processes in $T\cup \{s\}$ pretend that $s$
  broadcast $v$.
Now, processes in $D$ start executing.
In fact, execution continues as in \executionThree
  from time $t$ onward, where processes is $S$ are paused
  and all other processes execute (including $D$).
Because these processes cannot distinguish the execution
  from \executionThree, eventually they deliver $v$.
Note that processes in $D$ are correct and they deliver $v$,
  while processes in $S$ are also correct and deliver $w$---contradiction.
 \end{proof}

In the final stage of the proof, we leverage Lemma~\ref{lem:atleastonesig} to construct an execution in which many processes sign. This is done by allowing some process to be poised to sign, and then pausing it and letting a new process start executing. Thus, we apply Lemma~\ref{lem:atleastonesig} $f-1$ times to incrementally build an execution in which $f-1$ correct processes sign.

\begin{theorem}\label{lem:atleastfsigs}
Every algorithm that solves \rb{} in the M\&M model with $n = 2f+1$ and $f\geq 1$ has some execution in which at least $f-1$ correct non-sender processes sign.
\end{theorem}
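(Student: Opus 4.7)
The plan is to reduce to Lemma~\ref{lem:atleastonesig} and iterate it. First I would handle the edge cases: when $f=1$ the bound $f-1=0$ is vacuous, and if some execution of $A$ already has more than $f$ correct signers then at least $f-1$ of them are non-senders and we are done. Otherwise every execution of $A$ has at most $f$ correct signers, so the hypothesis of Lemma~\ref{lem:atleastonesig} is satisfied throughout; assume $f \geq 2$ from here on.

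Next I would build a single execution $E^*$ phase-by-phase. Designate two processes $d_1, d_2$ to be initially crashed, $f-2$ non-sender processes $p_1, \ldots, p_{f-2}$ to be initially delayed but correct, and the remaining $f$ non-senders $q_1, \ldots, q_f$ to be active from the start alongside the correct sender $s$. In Phase~$0$ only $s, q_1, \ldots, q_f$ take steps, and the active processes cannot distinguish this from an auxiliary execution $\tilde{E}_0$ in which all of $p_1, \ldots, p_{f-2}$ are additionally initially crashed; since $\tilde{E}_0$ has $f$ initial crashes, correct sender, and only crash failures, Lemma~\ref{lem:atleastonesig} forces some correct non-sender to sign in $\tilde{E}_0$, and by indistinguishability the same signer $r_1$ signs in $E^*$. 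For Phase~$k$ with $k \geq 1$, I pause $r_k$ (it remains correct, merely delayed by the asynchronous scheduler) and wake up $p_k$, keeping the active set at $f+1$ processes. To produce a new signer $r_{k+1}$, I build an auxiliary execution $\tilde{E}_k$ whose active processes behave exactly as in Phase~$k$ of $E^*$, but in which $d_1, d_2, p_{k+1}, \ldots, p_{f-2}$ are initially crashed and $r_1, \ldots, r_k$ truly crash immediately after signing. The total failure count in $\tilde{E}_k$ is $f$ with $f-k \geq 2$ initial crashes, so Lemma~\ref{lem:atleastonesig} applies and forces some non-failed non-sender to sign in $\tilde{E}_k$; by construction this signer lies outside $\{r_1, \ldots, r_k\}$, and indistinguishability transfers the signature into $E^*$. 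Iterating $k = 0, 1, \ldots, f-2$ yields $f-1$ correct non-sender signers in $E^*$.

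The main obstacle is the dual interpretation of $r_1, \ldots, r_k$: in $E^*$ they must be \emph{correct} so that they count toward the theorem's bound, yet in $\tilde{E}_k$ they must be \emph{failed} so that Lemma~\ref{lem:atleastonesig} yields a genuinely new signer at each step. Reconciling these two views hinges on the tight indistinguishability between a permanently delayed process under asynchrony and a process that crashes right after its last step, and on budgeting the initial crashes so that $\tilde{E}_k$ always has at least two initial crashes while the late crashes of the $r_i$ keep the total failure count at most $f$. The quantifier subtlety flagged in Section~\ref{sec:approach}---that Lemma~\ref{lem:atleastonesig} holds for \emph{all} executions of the specified form, not merely one---is exactly what lets the auxiliary executions $\tilde{E}_k$ be spun up on demand at each phase, rather than having to be baked into a single monolithic execution in advance.
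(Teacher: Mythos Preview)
Your proposal is correct and follows essentially the same approach as the paper: iterate Lemma~\ref{lem:atleastonesig} by running a window of $f{+}1$ active processes, pausing each newly-identified signer, and rotating in a fresh process, with the auxiliary executions interpreting the paused signers as late crashes and the not-yet-woken processes as initial crashes. The only cosmetic differences are that you reserve two processes $d_1,d_2$ as permanent initial-crash slots (the paper instead lets the two not-yet-participating processes play that role in each phase) and that your phrase ``crash immediately after signing'' should more precisely read ``crash at the moment it is paused in $E^*$'' so that the auxiliary $\tilde{E}_k$ truly coincides with $E^*$ step-for-step up to that point.
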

\begin{proof}
If $f=1$, the result is trivial; it requires $f-1=0$ processes to sign.

Now consider the case $f\geq 2$. If $A$ has an execution in which at least $f+1$ correct processes sign, then we are done. 
Now suppose $A$ has no execution in which at least $f+1$ correct processes sign.
Consider the following execution of $A$.

All processes and $s$ are correct. Initially, $s$ broadcasts $v$.
Then processes $s, p_1 \ldots p_{f}$ participate, and the rest are delayed. This execution is indistinguishable to $s, p_1 \ldots p_{f}$ from one in which the rest of the processes crashed. Therefore, by Lemma~\ref{lem:atleastonesig}, some process in $p_1 \ldots p_{f}$ eventually signs.
Call $p_1$ the first process that signs. We continue the execution until $p_1$'s next step is to make its signature visible. Then, we pause $p_1$, and let $p_{f+1}$ begin executing. Again, this execution is indistinguishable to processes $s, p_2 \ldots p_{f+1}$ from one in which the rest of the processes crashed, so by Lemma~\ref{lem:atleastonesig}, eventually some process in $p_2 \ldots p_{f+1}$ creates a signature and makes it visible. We let the first process to do so reach the state in which it is about to make its signature visible, and then pause it, and let $p_{f+2}$ start executing.


We continue in this way, each time pausing $p_i$ as it is about to make its signature visible, and letting $p_{f+i}$ begin executing. We can apply Lemma~\ref{lem:atleastonesig} as long as two processes have not participated yet. 
At that point, $f-1$ processes are poised to make their signatures visible. We then let these $f-1$ processes each take one step. This yields an execution of $A$ in which $f-1$ correct non-sender processes sign.
\end{proof}

\rmv{
The final theorem is an immediate corollary of Lemma~\ref{lem:atleastfsigs}, using the fact that we can extend the execution of Lemma~\ref{lem:atleastfsigs} with $f$ Byzantine signatures.

\begin{theorem}
Every algorithm that solves \rb{} on RDMA with $n = 2f+1$ and $f\geq 1$ has an execution in which at least $2f-1$ processes sign.
\end{theorem}

\begin{proof}
If $f=1$, the result is trivial because a Byzantine process can produce the required signature ($2f-1=1$).
If $f\geq 2$, the theorem follows from Lemma~\ref{lem:atleastfsigs}: for every algorithm $A$ that solves \rb{} on RDMA with $n = 2f+1$ processes, either $A$ has an execution in which $n$ processes sign, or, by Lemma~\ref{lem:atleastfsigs}, it has an execution in which at least $f-1$ correct processes sign. The former case establishes the result. In the latter case, Byzantine processes produce $f$ additional signatures for a total of $2f-1$.
\end{proof}
}

\section{Broadcast Algorithms}\label{sec:bcast}
In this section we present solutions for \combo{}.
We first implement \neb{} in Section~\ref{sec:neb}; then we use it as a building block to implement \rb, in Section~\ref{sec:rb}.
We prove the correctness of our algorithms in Appendix~\ref{app:cb-correctness} and~\ref{app:rb-correctness}. For both algorithms, we first describe the general execution outside the common case, which captures behavior in the worst executions; we then describe how delivery happens fast in the common case (without signatures).

\xparagraph{Process roles in broadcast.}
We distinguish between three process roles in our algorithms: sender, receiver, and replicator.
This is similar in spirit to the proposer-acceptor-learner model used by Paxos~\cite{lamport1998part}, and any process may play any number of roles.
If all processes play all three roles, then this becomes the standard model. 
The sender calls $\broadcast$, the receivers call $\deliver$, and the replicators help guarantee the properties of broadcast.
By separating replicators (often servers) from senders and receivers (often clients or other servers), we improve the practicality of the algorithms: clients, by not fulfilling the replicator role, need not remain connected and active to disseminate information from other clients.
Unless otherwise specified, $n$ and $f$ refer only to replicators; independently, the sender and any number of receivers can also be Byzantine.
Receivers cannot send or write any values, as opposed to the sender and replicators, but they can read the shared memory and receive messages.

\xparagraph{Background signatures.}
Our broadcast algorithms produce signatures in the background. We do so to allow the algorithms to be signature-free in the common-case. Indeed, in the common-case, receivers can deliver a message without waiting for background signatures. However, outside the common case, these signatures must still be produced by the broadcast algorithms in case some replicators are faulty or delayed. Both algorithms require a number of signatures that matches the bounds in Section~\ref{sec:lowerbounds} within constant factors.



\subsection{\neb}\label{sec:neb}


We give an algorithm for \neb{} that issues no signatures in the common case, when there is synchrony and no replicator is faulty. Outside this case, only the sender signs.

Algorithm~\ref{alg:fast-non-eq} shows the pseudocode. The broadcast and deliver events are called \textit{cb-broadcast} and \textit{cb-deliver}, to distinguish them from \textit{rb-broadcast} and \textit{rb-deliver} of \rb{}.
Processes communicate by sharing an array of \textit{slots}: process $i$ 
can write to \textit{slots[$i$]}, and can read from all slots.
To refer to its own slot, a processes uses index \textit{me}.
The sender $s$ uses its slot to broadcast its message while replicators use their slot to replicate the message. Every slot has two sub-slots---each a SWMR atomic register---one for a message (\textit{msg}) and one for a signature (\textit{sgn}).

To broadcast a message $m$, the sender $s$ writes $m$ to its \textit{msg} sub-slot (line~\ref{line:xswritem}). Then, in the background, $s$ computes its signature for $m$ and writes it to its \textit{sgn} sub-slot (line~\ref{line:xswrites}). The presence of \textit{msg} and \textit{sgn} sub-slots allow the sender to perform the signature computation in the background. Sender $s$ can return from the broadcast while this background task executes.

The role of a correct replicator is to copy the sender's message $m$ and signature $\sigma$, provided $\sigma$ is valid. The copying of $m$ and $\sigma$ (lines~\ref{line:xwhileinit}--\ref{line:xwhilefin}) are independent events, since a signature may be appended in the background, i.e., later than the message. The fast way to perform a delivery does not require the presence of signatures.
Note that correct replicators can have mismatching values only when $s$ is Byzantine and overwrites its memory.


A receiver $p$ scans the slots of the replicators. It delivers message $m$ when the content of a majority ($n{-}f$) of replicator slots contains $m$ and a valid signature by $s$ for $m$, and no slot contains a different message $m', m'\neq m$ with a valid sender signature (line~\ref{line:slow-return}). Slots with sender signatures for $m' \neq m$ result in a no-delivery. This scenario indicates that the sender is Byzantine and is trying to equivocate. Slots with signatures not created by $s$ are ignored so that a Byzantine replicator does not obstruct $p$ from delivering.
\begin{lstlisting}[columns=fullflexible,breaklines=true,keywords={while,if,else,return,do,for}, aboveskip=0pt, belowskip=0pt, float=t!,caption={\neb{} Algorithm with sender s},label={alg:fast-non-eq}]
Shared:
slots - @$n $@ array of "slots"; each slot is a 2-tuple (msg, sgn) of SWMR atomic registers, initialized to @$(\bot,\bot)$@.  

Sender code:
cb-broadcast(m):
    slots[me].msg.write(m) @\label{line:xswritem}@
    In the background:
        @$\sigma$@ = compute signature @for@ m
        slots[me].sgn.write(@$\sigma$@)@\label{line:xswrites}@

Replicator code:
while True: @\label{line:xwhileinit}@
    m = slots[s].msg.read()
    if m @$\neq$@ @$\bot$@:
        slots[me].msg.write(m)
    sign = slots[s].sgn.read()
    val = slots[me].msg.read()
    if val @$\neq$@ @$\bot$@ and sign @$\neq$@ @$\bot$@ and sign is a valid signature @for@ val:
        slots[me].sgn.write(sign) @\label{line:xwhilefin}@

Receiver code:
while True:
    others = scan() @\label{line:scan}@
    if others[i].msg has the same value m for all i in @$\Pi$@: // Fast path 
        cb-deliver(m); break @\label{line:fast-return}@
    if  others contains at least @$n-f$@ signed copies of the same value m
        and (@$\nexists $@i: others[i].sgn is a valid signature @for@ others[i].msg and others[i].msg @$\neq$@ m):  @\label{line:slow-check}@
        cb-deliver(m); break @\label{line:slow-return}@ 

scan(): @\label{line:scan-start}@
    others = [slots[i].(msg, sgn).read() for i in @$\Pi$@] @\label{line:fillOthers}@ 
    done = False
    while not done:
        done = True
        for i in @$\Pi$@:
            if others[i] == @$\bot$@:
                others[i] = slots[i].(msg, sgn).read() @\label{line:updateOthers}@
                if others[i] @$\neq$@ @$\bot$@:
                    done = False @\label{line:doneF}@
    return others @\label{line:scan-end}@
\end{lstlisting}

When there is synchrony and both the sender and replicators follow the protocol, a receiver delivers without using signatures.
Specifically, delivery in the fast path occurs when there is unanimity, i.e., all $n = 2f+1$ replicators replicated value $m$ (line~\ref{line:fast-return}), regardless of whether a signature is provided by $s$. A correct sender eventually appends $\sigma$, and $n-f$ correct replicators eventually copy $\sigma$ over, allowing another receiver to deliver $m$ via the slow path, even if a replicator misbehaves, e.g., removes or changes its value. \Naama{I think we should mention the fast path before the slow path. Then explain how even if someone delivered on the fast path and later Byz processes changed their values, others will never deliver something different}




An important detail is the use of a snapshot to read replicators' slots (line~\ref{line:scan}), as opposed to a simple collect. The scan operation is necessary to ensure that concurrent reads of the replicators' slots do not return views that can cause correct receivers to deliver different messages. To see why, imagine that the scan at line~\ref{line:scan} is replaced by a simple collect. Then, an execution is possible in which correct receiver $p_1$ reads some (correctly signed) message $m_1$ from $n-f$ slots and finds the remaining slots empty, while another correct receiver $p_2$ reads $m_2 \neq m_1$ from $n-f$ slots and finds the remaining slots empty. In this execution, $p_1$ would go on to deliver $m_1$ and $p_2$ would go on to deliver $m_2$, thus breaking the consistency property. We present such an execution in detail in Appendix~\ref{app:cb-correctness}. \dalia{//this is at the end of Appendix B; if someone can double check it, would be grateful; added it as 1 reviewer kindly asked for it} 

To prevent scenarios where correct receivers see different values at a majority of replicator slots, the \textit{scan} operation works as follows (lines~\ref{line:scan-start}--\ref{line:scan-end}): first, it performs a collect of the slots. 
If all the slots are non-empty, then we are done. 
Otherwise, we re-collect the \emph{empty slots} until no slot becomes non-empty between two consecutive collects. 
This suffices to avoid the problematic scenario above and to guarantee liveness despite $f$ Byzantine processes.


\rmv{
\Naama{
A few thoughts/comments:
\begin{itemize}
    \item This didn't really address the sender/replicator/receiver roles after first introducing them, did it? Or did I miss this somewhere? Why does the second algorithm work for this framework while the first one doesn't?
    \item We can reduce this by 1 delay easily, I think, by having the sender actually send everyone its message, rather than write it and have others read it. This would be one delay for the message to arrive at the replicator, instead of one for the sender to write it, plus 1 for the replicator to read it. Would this work? This would mean the sender doesn't act as a replicator, but I think that's fine...
    \item I think we should be able to prove a lower bound of 3 message delays for this: intuitively, 1 delay must happen for the replicator to receive the message. Now we need to argue that replicators \emph{must} write (1 delay), and that receivers \emph{must} read after that (another 1 delay). We should be able to argue this by showing  that if this doesn't happen, then we didn't use anything that the message-passing model can't do. That completes the proof, since we know that the message passing model can't solve non-equivocating broadcast (otherwise, it would be able to solve consensus with $n\geq 2f+1$).
    \item Regarding the last point, in our PODC'19 paper we don't actually use non-equivocating broadcast to solve consensus. We needed reliable broadcast, which takes longer. Do we know that we can solve consensus with $n\geq 2f+1$ using just non-equivocating broadcast? If so, I think that's a new result on its own, isn't it (not a huge one, but still nice)? 
\end{itemize}
}

Thoughts on reliable broadcast algorithm and/or lower bound.

\Naama{Newer thoughts:}

Actually, I think we can stick the same fast path on a reliable broadcast algorithm as we have for the non-equiv broadcast. That is, return if you see everyone (unsigned) with the same value. At most a minority of those can then change, but we'd still have a majority with the same value, which should be enough to return that value later (I  think).

If this is true, then I think we should probably  also measure running time in another way (not just common case). This is something we've mentioned before. But it feels unsatisfying to claim that reliable broadcast and non-eq broadcast take the same amount of time when clearly they don't. 

Another potential measure: how long does it take a correct process to decide when there are $f$ corruptions?

\begin{claim}
Non-equivocating broadcast cannot be solved in less than 3 delays when the sender is not a receiver, with $n \geq 2f+1$.
\Naama{For now, I am assuming we are counting the delays until the fastest receiver terminates. I think, if the sender is a receiver, it could terminate immediately after sending, which is why I added the extra assumption.}
\end{claim}
\begin{proof}
At least one delay is required for the sender to communicate the message to the receivers (and replicators). Now assume by contradiction that the claim is not true and consider an algorithm that terminates in at most one additional delay, and tolerates $f > n/3$ faults. Then note that each participant can execute one operation after receiving the sender's message. Furthermore, a read by a party $q$ cannot return a value that depends on $m$ and was written by any non-sender party $p$ (by `depends on', we mean that $p$ issued the write after $p$ received $m$). 
\Naama{Wait... what about an algorithm in which each replicator just forwards the message it receives from the sender to all receivers? Then if a receiver receives the same value from everyone, it returns. In the background, the replicators execute the same protocol as we've currently specified, and any receiver that didn't receive the same message from all in a timely manner also executes the same protocol as we have now. Wouldn't that work? In that case, sounds to me like we can do 2 delays. I think we need to come up with another way to measure delays (maybe some `average' notion, because this feels like cheating at bit :-P )}
\end{proof}

\Naama{Old proof for claim that we think is wrong ( that reliable broadcast needs at least 4 delays).}
\begin{proof}
As argued for non-equivocating broadcast, we need at least 1 delay for replicators to receive the initial message, and another 2 delays for minimal usage of the shared memory.  \Naama{Can we really argue 2 extra delays for non-equivocating broadcast? What if we have the fastest process only write, and the others also read, but they take longer? Should be able to argue against that case by indistinguishability.}

Assume by contradiction that there exists an algorithm, $A$ for reliable broadcast that tolerates $f < n/2$ Byzantine failures and takes at most 3 delays in the common case.

Consider the fastest process, $p$ executing $A$ in a common execution. As argued for non-equivocating broadcast, $p$ must read in its last step before committing. Let $S = \{s_1 \ldots s_n\}$ be the state $p$ saw during its read, where $s_k$ is the value in the register of process $k$. For processes $j$ whose register $p$ did not read, assume without loss of generality that $s_j = \bot$. Note that up to $f$ of these registers belong to Byzantine processes, and may change after $p$ delivers the message.  \Naama{To be continued}
\end{proof}

}

\subsection{\rb}\label{sec:rb}

We now give an algorithm for \rb{} that issues no signatures in the common case, and issues only $n+1$ signatures in the worst case. 
Algorithm~\ref{alg:rb-from-cb} shows the pseudocode.
%
%

Processes communicate by sharing arrays \textit{Echo} and \textit{Ready}, which
  have the same structure of sub-slots as \textit{slots} in Section~\ref{sec:neb}. 
\textit{Echo[$i$]} and \textit{Ready[$i$]} are writable only by replicator $i$, while the sender $s$ communicates with the replicators using an instance of \neb{} (CB) and does not access \textit{Echo} or \textit{Ready}. 
In this CB instance, $s$ invokes \textit{cb-broadcast}, acting as sender for CB, and the replicators invoke  \textit{cb-deliver}, acting as receivers for CB.

To broadcast a message, $s$ \textit{cb-broadcast}s $\langle$\textsc{Init},$m\rangle$ (line~\ref{line:ybdcst}). 
Upon delivering the sender's message $\langle$\textsc{Init},$m\rangle$, each replicator writes $m$ to its \textit{Echo} \textit{msg} sub-slot (line~\ref{line:ywecho}). 
Then, in the background, a replicator computes its signature for $m$ and writes it to its \textit{Echo} \textit{sgn} sub-slot (line~\ref{line:ywsigma}).
By the consistency property of \neb{}, if two correct replicators $r$ and $r'$ deliver $\langle$\textsc{Init},$m\rangle$ and $\langle$\textsc{Init}, $m'\rangle$ respectively, from $s$, then $m=m'$. Essentially, correct replicators have the same value or $\bot$ in their \textit{Echo} \textit{msg} sub-slot.

Next, replicators populate their \textit{Ready} slots with a \textit{ReadySet}.
A replicator $r$ constructs such a \textit{ReadySet} from the $n-f$ signed copies of $m$ read from the \textit{Echo} slots (lines~\ref{line:ybeginwait}--\ref{line:ReadySet}). 
In the background, $r$ reads the \textit{Ready} slots of other replicators and copies over---if $r$ has not written one already---any valid \textit{ReadySet} (line~\ref{line:ReadySetBkgrd}).
Thus, totality is ensured (Definition~\ref{def:rb}), as the \textit{ReadySet} created by any correct replicator is visible to all correct receivers. 
\begin{lstlisting}[columns=fullflexible,breaklines=true,keywords={if,when,return,else,for,break},aboveskip=0pt, belowskip=0pt,float=ht!,caption={Reliable Broadcast Algorithm with sender $s$},label={alg:rb-from-cb}]
Shared:
Echo, Ready - @$n $@ array of "slots"; each slot is a 2-tuple (msg, sgn) of SWMR atomic registers, initialized to @$(\bot,\bot)$@.  

Sender code:
rb-broadcast(m):
    cb-broadcast(<@\sc{Init}@,m>) @\label{line:ybdcst}@

Replicator code:
state = WaitForSender // @$\in$@{WaitForSender,WaitForEchos}
while True:
    if state == WaitForSender:
        if cb-delivered <@\sc{Init}@,m> from s:
            Echo[me].msg.write(m) @\label{line:ywecho}@
            In the background:
                @$\sigma = $@ compute signature @for@ m
                Echo[me].sgn.write(@$\sigma$@) @\label{line:ywsigma}@
            state = WaitForEchos
         
    if state == WaitForEchos:@\label{line:ybeginwait}@
        ReadySet = @$\emptyset$@
        for i @$\in \Pi$@:
            other = Echo[i].(msg,sgn).read()
            if other.msg == m and other.sgn is m validly signed by i:
                ReadySet.add((i,other))
    
        if size(ReadySet) @$\geq n-f$@:@\label{line:ReadySetSize}@
            ready = True
            Ready[me].msg.write(ReadySet)@\label{line:ReadySet}@

In the background:
    while True
        if not ready:
            others = [Ready[i].msg.read() for i in @$\Pi$@]
            if @$\exists$@i: others[i] is a valid ReadySet:
                ready = True
                Ready[me].msg.write(others[i])@\label{line:ReadySetBkgrd}@

Receiver code:
while True:
    others = [Echo[i].msg.read() for i in @$\Pi$@] 
    proofs = [Ready[i].msg.read() for i in @$\Pi$@] 
    if others contains @$n$@ matching values m: // Fast path
        rb-deliver(m); break @\label{line:rb-fast-return}@ 
    if proofs contains @$n-f$@ valid ReadySet @for@ the same value m:
        rb-deliver(m); break @\label{line:rb-slow-return}@ 
\end{lstlisting}
    %



To deliver $m$, a receiver $p$ reads $n-f$ valid \textit{ReadySet}s for $m$ (line~\ref{line:rb-slow-return}).\footnote{In contrast to Algorithm~\ref{alg:fast-non-eq}, receivers need not use the \textit{scan} operation when gathering information from the replicators' \textit{Ready} slots because there can only be a single value with a valid \textit{ReadySet} (Invariant~\ref{invariant:rbValidSets}).} This is necessary to allow a future receiver $p'$ deliver a message as well.
Suppose that $p$ delivers $m$ by reading a single valid \textit{ReadySet} $R$.\footnote{A similar argument that breaks totality applies if $p$ were to deliver $m$ by reading $n-f$ signed values of $m$ in the replicators' \textit{Echo} slots.}
Then, the following scenario prevents $p'$ from delivering: let sender $s$ be Byzantine and let $R$ be written by a Byzantine replicator $r$.
Moreover, let a \textit{single} correct replicator have \textit{cb-deliver}ed $m$, while the remaining correct replicators do not deliver at all, which is allowed by the properties of \neb{}.
So, the \textit{ReadySet} contains values from a single correct replicator and $f$ other Byzantine replicators.
If $r$ removes $R$ from its \textit{Ready} slot, it will block the delivery for $p'$ since no valid \textit{ReadySet} exists in memory.



A receiver $p$ can also deliver the sender's message $m$ using a fast path. 
The signature-less fast path occurs when $p$ reads $m$ from the \textit{Echo} slots of all replicators (line~\ref{line:rb-fast-return}), and the delivery of the \textsc{Init} message by the replicators is done via the fast path of \neb. 
This is the common-case, when replicators are not faulty and replicate messages timely.
Note that $p$ delivering $m$ via the fast path does not prevent another receiver $p'$ from delivering.
Process $p'$ delivers $m$ via the fast path if all the \textit{Echo} slots are in the same state as for $p$. Otherwise, e.g., some Byzantine replicators overwrite their \textit{Echo} slots, $p'$ delivers $m$ by relying on the $n-f$ correct replicators following the protocol (line~\ref{line:rb-slow-return}).

\section{Consensus}\label{sec:consensus}
We now give an algorithm for consensus using \neb{} as its communication primitive, rather than the commonly used primitive, \rb.
Our algorithm is based on the PBFT algorithm~\cite{castro1999practical,pbft2002} and proceeds in a sequence of (consecutive) views.  
It has four features: (1) it works for
$n=2f+1$ processes, (2) it issues no signatures in the common-case, (3) it issues $O(n^2)$ signatures on a view-change and (4) it issues $O(n)$ required background signatures within a view.


Our algorithm uses a sequence of \neb{} instances indexed by a broadcast sequence number $k$. When process $p$ broadcasts its $k^{\text{th}}$ message $m$, we say that $p$ broadcasts $(k, m)$. We assume the following ordering across instances, which can be trivially
guaranteed: ({\bf FIFO delivery})
 For $k \ge 1$, no correct process delivers $(k,m_k)$ from $p$ unless it has delivered $(i,m_i)$ from $p$, for all $i<k$.

Algorithm~\ref{alg:consensus} shows the pseudocode. Appendix~\ref{app:cons-correctness} has its full correctness proof.
The protocol proceeds in a sequence of consecutive \textit{views}.
Each view has a primary process, defined as the view number $\mathrm{mod}~n$ (line~\ref{line:primary}).
A view has two phases, \prepare and \commit. 
There is also
a view-change procedure initiated by a \vc message.

When a process is the primary (line~\ref{line:coord}), it broadcasts a \prepare message with its estimate \textit{init} (line~\ref{line:bdcastprepare}), which is either its input value or a value acquired in the previous view (line~\ref{line:n_inputVal}).
Upon receiving a valid \prepare message, a replica broadcasts a \commit message (line~\ref{line:ph2}) with the estimate it received in the \prepare message. We define a \prepare to be valid when it originates from the primary and either (a) \textit{view $= 0$} (any estimate works), or (b) \textit{view > $0$} and the estimate in the \prepare message has a proof from the previous view. Appendix~\ref{app:cons-validMsgs} details the conditions for a message to be valid.
When a replica receives an invalid \prepare message from the primary or times out,
it broadcasts a \commit message with $\bot$.
If a replica accepts a \prepare message with \textit{val} as estimate and $n-f$ matching \commit messages (line~\ref{line:n_decideCondition}), it decides on \textit{val}.
\begin{lstlisting}[columns=fullflexible,breaklines=true, keywords={loop, if, end, then, function, when, while,else, wait, until},aboveskip=0pt, belowskip=0pt,float=htb!,caption={Consensus protocol based on \neb{} ($n = 2f+1$)}, label={alg:consensus}]
propose(v@\subi@):
    view@\subi@ = @$0$@; est@\subi@ = @$\bot$@;  aux@\subi@ = @$\bot$@ 
    proof@\subi@ = @$\emptyset$@; vc@\subi@= @$(0,\bot,\emptyset)$@
    decided@\subi@ = False    
    while True:
        p@\subi@ = view@\subi@ @$\%\; n$@@\label{line:primary}@
        
        // Phase 1
        if p@\subi@ == i: @\label{line:coord}@
            init@\subi@ = est@\subi@ if est@\subi@ @$\neq \bot$@ else v@\subi@ @\label{line:n_inputVal}@
            cb-broadcast(<@\sc{Prepare}@, view@\subi@, init@\subi@, proof@\subi@>) @\label{line:bdcastprepare}@
        wait until receive valid <@\sc{Prepare}@, view@\subi@, val, proof> from p@\subi@ or timeout on p@\subi@ @\label{line:n_acceptPhase1}@
        if received valid <@\sc{Prepare}@, view@\subi@, val, proof> from p@\subi@:
            aux@\subi@ = val @\label{line:auxupdate}@
            vc@\subi@ = (view@\subi@,val,proof) @\label{line:vctuple}@
        else:
            aux@\subi@ = @$\bot$@ @\label{line:auxbot}@ 
        
        // Phase 2
        cb-broadcast(<@\sc{Commit}@, view@\subi@, aux@\subi@>)@\label{line:ph2}@
        wait until receive valid <@\sc{Commit}@, view@\subi@, *> from @$n-f$@ processes 
                   and (@$\forall$@j: receive valid <@\sc{Commit}@, view@\subi@, *> from j or timeout on j)@\label{line:waituntil_phase2}@
        @$\forall$@j: R@\subi@[j] = val if received valid <@\sc{Commit}@, view@\subi@, val> from j else @$\bot$@ @\label{line:set_Ri}@
        if @$\exists$@val @$\ne \bot:\#_{\text{val}}$@(R@\subi@)@$\geq n-f$@ and aux@\subi@ == val: @\label{line:n_decideCondition}@
            try_decide(val)@\label{line:n_decision}@
        
        // Phase 3
        cb-broadcast(<@\sc{ViewChange}@, view@\subi@ + 1, vc@\subi@>@$_{\sigma_i}$@)@\label{line:vc}@
        wait until receive @$n-f$@ non-conflicting view-change certificates for view@\subi@ + 1 @\label{line:acceptVC}@
        proof@\subi@ =  set of non-conflicting view-change certificates @\label{line:proof}@
        est@\subi@ = val in proof@\subi@ associated with the highest view @\label{line:n_EstVC}@
        view@\subi@ = view@\subi@ + 1 @\label{line:n_viewChange}@
    
    In the background:
        when cb-deliver valid <@\sc{ViewChange}@, view', vc>@$_{\sigma_j}$@ from j:
            cb-broadcast(<@\sc{ViewChangeAck}@, d>@$_{\sigma_i}$@) @\label{line:sendAck}@             // d is the view-change message being ACKed
        
try_decide(val):
    if not decided@\subi@:
        decided@\subi@ = True
        decide(val)@\label{line:n_finalDecide}@
\end{lstlisting}


The view-change procedure ensures that all correct replicas eventually reach a view with a correct primary and decide. 
It uses an acknowledgement phase similar
to PBFT with MACs~\cite{pbft2002}. While in~\cite{pbft2002} the mechanism is used so that the primary can prove the authenticity of a view-change message sent by a faulty replica, we use this scheme to ensure that (a) a faulty participant cannot lie about a committed value in its \vc{} message and (b) valid \vc{} messages can be received by all correct replicas.

A replica starts a view-change by broadcasting a signed \vc message with its view-change tuple (line~\ref{line:vc}). The view-change tuple \textit{(view, val, proof\textsubscript{val})} is updated when a replica receives a valid \prepare message (line~\ref{line:vctuple}). It represents the last non-empty value a replica accepted as a valid estimate and the view when this occurred. We use the value's proof, \textit{proof\textsubscript{val}}, to prevent a Byzantine replica from lying about its value: suppose a correct replica decides \textit{val} in view $v$, but in view $v+1$, the primary \textit{p} is silent, and so no correct replica hears from \textit{p}; without the proof, a Byzantine replica could claim to have accepted \textit{val$'$} in $v+1$ from \textit{p} during the view-change to $v+1$, thus overriding the decided value $\textit{val}$. 

When a replica receives a valid \vc message, it responds by broadcasting a signed \vcack containing the \vc message (line~\ref{line:sendAck}). A common practice is to send a digest of this message instead of the entire message~\cite{castro1999practical}.
We define a \vc message $m$ from $p$ to be valid when the estimate in the view-change tuple corresponds to the value broadcast by $p$ in its latest non-empty \commit and $m$'s proof is valid.
We point out that, as an optimization, this proof can be removed from the view-change tuple and be provided upon request when required to validate \vc messages. For instance, in the scenario described above, when a (correct) replica $r$ did not accept \textit{val}$'$ in view $v+1$, as claimed by the Byzantine replica $r'$, $r$ can request $r'$ to provide a proof for \textit{val}$'$. 

A view-change certificate consists of a \vc message and $n-f-1$ corresponding \vcack messages. This way, each view-change certificate has the contribution of at least one correct replica, who either produces the \vc message or validates a \vc message. Thus, when a correct replica $r$ receives a view-change certificate relayed by the primary, $r$ can trust the contents of the certificate.

To move to the next view, a replica must gather a set of $n-f$ non-conflicting view-change certificates $\Psi$. This step is performed by the primary of the next view, who then includes this set with its \prepare message for the new view. Two view-change certificates conflict if their view-change messages carry a tuple with different estimates ($\neq \bot$), valid proof, and same view number.  If the set $\Psi$ consists of tuples with estimates from different views, we select the estimate associated with the highest view. 
Whenever any correct replica decides on a value \textit{val} within a view, the protocol ensures a set of non-conflicting view-change certificates can be constructed only for \textit{val} and hence the value is carried over to the next view(s).





\subsection{Discussion}



We discuss how Algorithm~\ref{alg:consensus} achieves the four features mentioned at the beginning of Section~\ref{sec:consensus}. The first feature (the algorithm solves consensus with $n=2f+1$ processes) follows directly from the correctness of the algorithm. The second feature (the algorithm issues no signatures in the common-case) holds because in the common-case, processes will be able to deliver the required \prepare{} and \commit{} messages and decide in the first view, without having to wait for any signatures to be produced or verified. The third feature (the algorithm issues $O(n^2)$ signatures on view-change) holds because, in the worst case, during a view change each process will sign and broadcast a \vc{} message, thus incurring $O(n)$ signatures in total, and, for each such message, each other process will sign and broadcast a \vcack{} message, thus incurring $O(n^2)$ signatures. The fourth feature states that the algorithm issues $O(n)$ required background signatures within a view. These signatures are incurred by \textit{cb-broadcast}ing \prepare{} and \commit{} messages. In every view, correct processes broadcast a \commit{} message, thus incurring $n-f=O(n)$ signatures in total.

To the best of our knowledge, no existing algorithm has achieved all these four features simultaneously. The only broadcast-based algorithm which solves consensus with $n=2f+1$ processes that we are aware of, that of Correia et al.~\cite{correia2010asynchronous}, requires $O(n)$ calls to \rb{} before any process can decide; this would incur $O(n^2)$ required background signatures when using our \rb{} implementation---significantly more than our algorithm's $O(n)$ required background signatures.

At this point, the attentive reader might have noticed that our consensus algorithm uses some techniques that bear resemblance to our \rb{} algorithm in Section~\ref{sec:bcast}. Namely, the primary of a view \textit{cb-broadcast}s a \prepare{} message which is then echoed by the replicas in the form of \commit{} messages. Also, during view change, a replica's \vc{} message is echoed by other replicas in the form of \vcack{} messages. This is reminiscent of the Init-Echo technique used by our \rb{} algorithm. 

Thus, the following question arises: Can we replace each instance of the witnessing technique in our algorithm by a single \rb{} call and thus obtain a conceptually simpler algorithm, which also satisfies the three above-mentioned properties? Perhaps surprisingly, the resulting algorithm is incorrect. It allows an execution which breaks agreement in the following way: a correct replica $p_1$ \textit{rb-delivers} some value $v$ from the primary and decides $v$; sufficiently many other replicas time out waiting for the primary's value and change views without ``knowing about'' $v$; in the next view, the primary \textit{rb-broadcasts} $v'$, which is delivered and decided by some correct replica $p_2$.

Intuitively, by using a single \rb{} call instead of multiple \neb{} calls, some information is not visible to the consensus protocol. Specifically: while it is true that, in order for $p_1$ to deliver $v$ in the execution above, $n-f$ processes must echo $v$ (and thus they ``know about'' $v$), this knowledge is however encapsulated inside the \rb{} abstraction and not visible to the consensus protocol. Thus, the information cannot be carried over to the view-change, even by correct processes. This intuition provides a strong motivation to use \neb{}---rather than \rb---as a first-class primitive in the design of Byzantine-resilient agreement algorithms.

\section{Conclusion}
A common tool to address Byzantine failures is to use signatures or
  lots of replicas.
However, modern hardware makes these techniques prohibitive:
  signatures are much more costly than network communication, and
  excessive replicas are expensive.
Hence, we seek algorithms that minimize the number of signatures
  and replicas.
We applied this principle to broadcast primitives in a system that adopts the message-and-memory model,
  to derive algorithms that avoid signatures in the common case, use nearly-optimal
  number of signatures in the worst case, and require only $n=2f{+}1$ replicas.
We proved worst-case lower bounds on the number of signatures required by \neb{} and \rb{}, showing a separation between these problems.
We presented a Byzantine consensus algorithm based on our \neb{} primitive. 
This is the first consensus protocol for $n=2f{+}1$ without signatures 
  in the common case. 
A novelty of our protocol is the use of \neb instead of \rb, which resulted in 
  fewer signatures than existing consensus protocols based on \rb.

\bibliographystyle{plainurl}

\bibliography{references}

\appendix
\section{APPENDIX: Latency}\label{app:latency}

\begin{figure}[H]
    \centering
    \begin{tikzpicture}
      \tikzstyle{every node}=[font=\small]
      \begin{axis}[
        xbar,
        width=0.65\textwidth, height=4.5cm, 
        enlarge y limits=0.15,
        xmin=0, xmax=200, enlarge x limits={0.05},
        bar shift= 0pt,
        xlabel={Latency [$\mu s$]},
        symbolic y coords={Send a message using RDMA o/IB, Send a message using TCP o/IB,Sign a message using CPU, Send a message using TCP o/Ethernet,Sign a message using FPGA},
        ytick={Send a message using RDMA o/IB, Send a message using TCP o/IB,Sign a message using CPU, Send a message using TCP o/Ethernet,Sign a message using FPGA},
        nodes near coords, 
        nodes near coords align={horizontal},
        ]
        \addplot+[postaction={pattern=north east lines,pattern color=blue}] coordinates {(1.3,Send a message using RDMA o/IB) (14.120,Send a message using TCP o/IB) (53.442,Send a message using TCP o/Ethernet)};
        \addplot+[postaction={pattern=north west lines,pattern color=red}] coordinates {(36.1,Sign a message using CPU) (176.80,Sign a message using FPGA)};
      \end{axis}
    \end{tikzpicture}
    \caption{
    RDMA communication is significantly faster than signature creation using CPU
    or hardware acceleration (FPGA).
    The graph shows the latency of sending
    or signing a 32-byte message.
    IB means Infiniband, a faster interconnect
    than Ethernet found in data centers.
    TCP latencies are obtained using sockperf~\cite{sockperf}. RDMA latency is obtained using perftest~\cite{perftest}. Signatures use optimized implementations for CPU~\cite{secp256k1-impl} and FPGA~\cite{silex} of the ECDSA algorithm on the secp256k1 elliptic curve~\cite{secp256k1}. 
    An FPGA improves the throughput of signature creation (not
    shown in figure), but not its latency, due to their relatively low clock speeds (compared to CPUs) and the non-parallelizable nature of algorithms for digital signature.}
    \label{fig:sigs}
\end{figure}
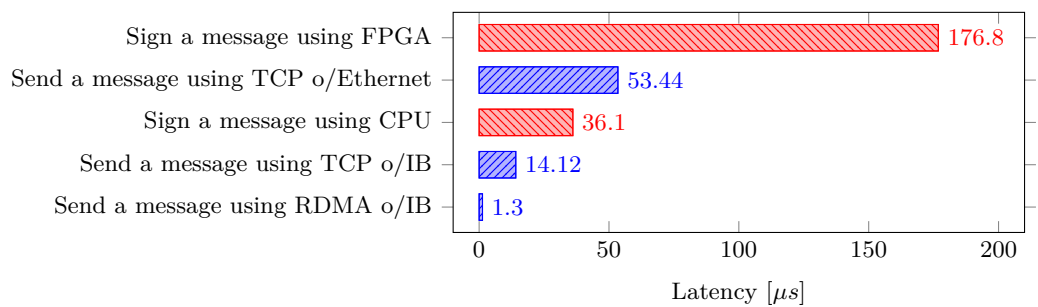

\section{APPENDIX: \neb{} Correctness}\label{app:cb-correctness}

We start with a simple observation:

\begin{observation}\label{obs:no-overwrite}
If $p$ is a correct process, then
no sub-slot that belongs to $p$ is written to more than once.
\end{observation}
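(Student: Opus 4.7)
The plan is to prove Observation~\ref{obs:no-overwrite} by straightforward case analysis on the role that the correct process $p$ plays, examining each write statement in Algorithms~\ref{alg:fast-non-eq} and~\ref{alg:rb-from-cb}. Since receivers never write to shared memory (they only scan and read), it suffices to handle the sender and replicator cases.

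For the sender case, each broadcast request triggers the sender's code exactly once. In Algorithm~\ref{alg:fast-non-eq}, the sender writes \textit{slots[me].msg} exactly once at line~\ref{line:xswritem} and, via the background task, writes \textit{slots[me].sgn} exactly once at line~\ref{line:xswrites}. In Algorithm~\ref{alg:rb-from-cb}, the sender performs no direct writes of its own; it only issues \textit{cb-broadcast} at line~\ref{line:ybdcst}, and the sub-slots written inside that call are handled by the sender case of the inner \neb{} instance.

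For the replicator case in Algorithm~\ref{alg:rb-from-cb}, each sub-slot has an explicit single-writer guard: \textit{Echo[me].msg} (line~\ref{line:ywecho}) is written only upon \textit{cb-delivery} of an \textsc{Init} message, which by the No-duplication property of \neb{} (Definition~\ref{def:neb}) occurs at most once; \textit{Echo[me].sgn} (line~\ref{line:ywsigma}) is written once in the background that follows that single delivery; and \textit{Ready[me].msg} (lines~\ref{line:ReadySet} and~\ref{line:ReadySetBkgrd}) is guarded by the local boolean \textit{ready}, which is set to \true{} before each write and checked by both the main flow and the background task.

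The only real subtlety, and thus the main obstacle, is the replicator loop of Algorithm~\ref{alg:fast-non-eq}, whose pseudocode appears to invoke \textit{slots[me].msg.write} and \textit{slots[me].sgn.write} on every iteration. I would resolve this by arguing that a correct replicator's repeated writes are \emph{idempotent} -- each subsequent iteration writes the same value it has already written -- so the sub-slot's register state changes at most once, which is the form of the observation actually needed by the later correctness arguments. An equivalent and slightly cleaner framing is to augment the replicator with two local booleans (guarding the two writes), set upon the first successful write, which yields a behaviorally identical algorithm in which each sub-slot is written at most once in the literal sense. Either formulation immediately gives the claim.
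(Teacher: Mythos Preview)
Your case analysis goes well beyond the paper's own proof, which is a two-line assertion: since $p$ is correct it never writes any sub-slot more than once, and since sub-slots are SWMR no other process can write them. The paper simply reads the replicator pseudocode under the tacit convention that each sub-slot is copied once, without engaging with the looping issue you correctly flag. (Note also that the observation lives in Appendix~\ref{app:cb-correctness} and is only invoked for Algorithm~\ref{alg:fast-non-eq}, so your treatment of Algorithm~\ref{alg:rb-from-cb} is extra.)

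That said, your primary resolution via idempotence has a genuine gap. If the sender $s$ is Byzantine, it can overwrite \textit{slots[s].msg} between two iterations of the replicator's loop; a correct replicator following the literal pseudocode would then read the new value and write it to \textit{slots[me].msg}, so the successive writes are \emph{not} to the same value and the register state changes more than once. This is precisely the scenario the later Consistency proof must exclude when it argues that a correct replicator ``never changes its written value.'' Your alternative with local boolean guards does yield the intended single-write behavior and is almost certainly the reading the authors have in mind, but it is not ``behaviorally identical'' to the literal loop in the Byzantine-sender case; you should present it as the intended interpretation of the pseudocode (which the paper's proof silently assumes) rather than as an equivalent reformulation of it.
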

\begin{proof}
Since $p$ is correct, $p$ never writes to any sub-slot more than once. 
Furthermore, since all sub-slots are single-writer registers, no other process can write to these sub-slots.
\end{proof}

Before proving that our implementation, Algorithm~\ref{alg:fast-non-eq}, satisfies the properties of \neb, we show two intermediary results with respect to the liveness and safety of the scan operation.

\begin{lemma}[Termination of scan.] Each $\scan$ operation returns.
\end{lemma}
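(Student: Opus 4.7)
The plan is to establish termination of the outer \textbf{while} loop via a simple monotonicity argument on the local array \textit{others}. First I would note that once an entry \textit{others}[$i$] is set to a non-$\bot$ value, the guard \textit{if others}[$i$] $=\bot$ fails on every subsequent iteration, so that entry is never re-read and never re-assigned inside this \scan{} invocation. Consequently, the number of $\bot$ entries in \textit{others} is monotonically non-increasing across iterations of the outer loop.

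Next I would observe that every iteration of the outer loop that fails to terminate it (i.e., sets \textit{done} $=$ \textit{False}) must strictly decrease the number of $\bot$ entries in \textit{others}: the flag \textit{done} is cleared only inside the branch that just assigned a non-$\bot$ value to some previously-$\bot$ entry. Since $|\Pi|=n$, this strict decrease can occur at most $n$ times, so after at most $n+1$ iterations of the outer loop \textit{done} remains \textit{True} and \scan{} returns.

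Finally, I would argue that each individual iteration is itself finite: the inner \textbf{for} loop ranges over the finite set $\Pi$, and each step performs one atomic read of a SWMR register, which returns in finite time by the register semantics. Combining these three observations yields termination.

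The main pitfall I anticipate is not technical depth but rather the presence of Byzantine replicators: a Byzantine process could alternate the contents of its slot between $\bot$ and non-$\bot$ values in an attempt to keep some slot oscillating. The monotonicity argument is robust to this because the guard for re-reading tests the locally captured \textit{others}[$i$], not the live register; once a non-$\bot$ value is captured, that slot is no longer re-read in the \scan{}, regardless of subsequent Byzantine writes. I would make this observation explicit (perhaps citing Observation~\ref{obs:no-overwrite} only to contrast that here we do not even need it for termination) to forestall the natural concern.
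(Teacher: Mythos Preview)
Your proposal is correct and takes essentially the same approach as the paper: both argue that each non-terminating iteration of the outer loop strictly decreases the number of $\bot$ entries in \textit{others} (since \textit{done} is cleared only when a previously-$\bot$ entry becomes non-$\bot$, and such an entry is never re-read), so at most $n{+}1$ iterations suffice. Your direct monotonicity phrasing is slightly cleaner than the paper's proof by contradiction, and your explicit remark that the guard tests the locally captured value rather than the live register---hence Byzantine oscillation of the slot contents is irrelevant---is a nice addition that the paper leaves implicit.
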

\begin{proof}
We observe the following:
\begin{enumerate}
    \item If some slot $S$ goes from empty to non-empty between consecutive iterations of the while loop, then some process (the writer of slot $S$) wrote a value to $S$. This causes the while loop to continue (line~\ref{line:doneF}).
    \item Termination condition: the while loop exits (and thus the $\scan()$ operation returns) once either (a) \textit{others} has no empty slots, or (b) no slot goes from empty to non-empty between two consecutive iterations of the while loop (line~\ref{line:doneF} is never executed). 
    \item Once a slot $S$ is read and has non-empty value, \t{others} gets updated (lines~\ref{line:fillOthers} or~\ref{line:updateOthers}) and the slot $S$ is never read again.
    \item the size of $\t{others}$ is equal to the number of processes, $n$.
\end{enumerate}

By contradiction, assume the scan function never terminates.
This implies the scan function did not return after executing $n+1$ iterations of the while loop.
This means, each iteration at least one slot went from being empty to containing a value. This value is added to the $\t{others}$ array (line~\ref{line:updateOthers}) and the slot is not read again in future iterations.
Since the array $\t{others}$ is bounded by $n$, after $n$ iterations $\t{others}$ contains only non-empty values. 
At the $n+1^{th}$ execution of the loop, no slot is empty, $\t{done}$ stays true, and hence the operation returns. Contradiction.

Given every scan operation returns after executing at most $n+1$ times the while loop and each loop invokes at most $n$ reads of the base registers, the complexity of the $\scan()$ operation is within $O(n^2)$.
\end{proof}

\begin{lemma}[Non-inversion of scan]\label{lemma:inversionScan}
Let $p_1$ and $p_2$ be correct processes who invoke $\scan$. Let $V_1$ and $V_2$ be the return values of those scans, respectively. If $V_1$ contains some value $m_1$ in at least $n-f$ slots and $V_2$ contains some value $m_2$ in at least $n-f$ slots, then $V_1$ contains $m_2$ in at least one slot, or $V_2$ contains $m_1$ in at least one slot. 
\end{lemma}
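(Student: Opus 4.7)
The plan is a proof by contradiction that extracts a cyclic chain of strict temporal inequalities between read events from the two scans. Suppose $m_1 \neq m_2$ and, contrary to the lemma, $V_1$ contains no slot equal to $m_2$ and $V_2$ contains no slot equal to $m_1$. Let $Q_i$ denote the set of slots holding $m_i$ in $V_i$, so $|Q_i| \geq n-f = f+1$. Since at most $f$ processes are Byzantine, each $Q_i$ contains at least one correct-replicator slot; moreover, by Observation~\ref{obs:no-overwrite}, a correct slot can transition only once, from $\bot$ to its unique written value. I therefore pick a correct $c_1 \in Q_1$ and a correct $c_2 \in Q_2$. Because $V_2$ excludes $m_1$, the only remaining possibility for $V_2[c_1]$ is $\bot$; symmetrically $V_1[c_2] = \bot$.

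I then pinpoint four read events across the two scans. Let event $A$ be $p_2$'s read of $c_1$'s slot in $p_2$'s final while-loop iteration $K_2$, which must return $\bot$, since $V_2[c_1] = \bot$ forces every read of that slot during $p_2$'s scan to return $\bot$. Let event $B$ be $p_1$'s read of $c_1$'s slot in the iteration $j_1$ where it first observed $m_1$. Let event $C$ be $p_1$'s read of $c_2$'s slot in $p_1$'s final iteration $K_1$, which returns $\bot$. Let event $D$ be $p_2$'s read of $c_2$'s slot in iteration $j_2$, where it first observed $m_2$. Since $c_1$'s single write of $m_1$ must lie strictly between $A$ (which sees $\bot$) and $B$ (which sees $m_1$), we obtain $t_A < t_B$; analogously $t_C < t_D$ via $c_2$'s single write.

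For the remaining two edges I argue $j_1 < K_1$ and $j_2 < K_2$ strictly, and then invoke the fact that iterations of a single scan run sequentially in time. If $j_1 = 0$ the claim is immediate because the first collect precedes iteration $K_1 \geq 1$; otherwise $j_1 \geq 1$, and in that very iteration $c_1$'s slot is observed to become non-$\bot$, which sets $\mathit{done}$ to $\mathit{False}$ and forces at least one further iteration, giving $j_1 < K_1$; symmetrically $j_2 < K_2$. Hence every read in iteration $j_1$ precedes every read in iteration $K_1$, yielding $t_B < t_C$, and $t_D < t_A$ by the symmetric reasoning. Chaining gives $t_A < t_B < t_C < t_D < t_A$, a contradiction. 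The main obstacle I foresee is exactly this strict $j_i < K_i$ step: it is tempting to conflate ``the iteration in which a slot first becomes non-$\bot$'' with ``the final iteration,'' but the proof crucially relies on the $\mathit{done}$-reset logic to separate them, so that the final iteration strictly succeeds the one in which the correct slot's write was first observed. The rest of the argument is routine temporal bookkeeping.
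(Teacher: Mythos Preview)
Your proof is correct and follows essentially the same approach as the paper: a contradiction argument that selects one correct replicator from each majority, observes that the other scanner must have seen that slot as $\bot$, and then derives a cycle of strict temporal inequalities among four read events. The paper phrases the four events uniformly as ``last reads'' and uses the scan structure to order the non-$\bot$ read before the $\bot$ read within each scanner, whereas you pick the \emph{first} non-$\bot$ read for $B$ and $D$ and argue $j_i < K_i$ via the \textit{done}-reset logic; both routes yield the same cycle.
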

\begin{proof}
Assume by contradiction that $V_1$ does not contain $m_2$ and $V_2$ does not contain $m_1$.

Since $V_1$ contains $m_1$ in at least $n-f$ slots, it must be that $m_1$ was written by at least one correct process; call this process $r_1$. Similarly, $m_2$ must have been written by at least one correct process; call this process $r_2$. Then the following must be true:
\begin{enumerate}
    \item $p_1$ must have read the slot of $r_2$ at least twice and found it empty. Let $t_{p_1\leftarrow r_2}$ be the linearization point of $p_1$'s last read of $r_2$'s slot before $p_1$ returns from the scan.
    \item $p_2$ must have read the slot of $r_1$ at least twice and found it empty. Let $t_{p_2\leftarrow r_1}$ be the linearization point of $p_2$'s last read of $r_1$'s slot before $p_2$ returns from the scan.
    \item $p_1$ must have read the slot of $r_1$ and found it to contain $m_1$. Let $t_{p_1\leftarrow r_1}$ be the linearization point of $p_1$'s last read of $r_1$'s slot.
    \item $p_2$ must have read the slot of $r_2$ and found it to contain $m_2$. Let $t_{p_2\leftarrow r_2}$ be the linearization point of $p_2$'s last read of $r_2$'s slot.
\end{enumerate}

We now reason about the ordering of $t_{p_1\leftarrow r_1}$, $t_{p_1\leftarrow r_2}$, $t_{p_2\leftarrow r_1}$, and $t_{p_2\leftarrow r_2}$:
\begin{enumerate}
    \item $t_{p_1\leftarrow r_1}$ < $t_{p_1\leftarrow r_2}$. Process $p_1$'s last read of $r_2$'s slot must have occurred during the last iteration of the while loop before returning from the scan. Furthermore, $p_1$'s last read of $r_1$'s slot cannot have occurred on the same last iteration of the loop, otherwise the non-empty read would have triggered another iteration; thus, $p_1$'s last read of $r_1$'s slot must have occurred either in a previous iteration at line~\ref{line:updateOthers}, or initially at line~\ref{line:fillOthers}.
    \item Similarly, $t_{p_2\leftarrow r_2}$ < $t_{p_2\leftarrow r_1}$. 
    \item $t_{p_1\leftarrow r_2}$ < $t_{p_2\leftarrow r_2}$. Process $p_1$'s last read of $r_2$ returns an empty value, while $p_2$'s last read of $r_2$ returns $m_2$. Since $r_2$ is correct, its slot cannot go from non-empty to empty, thus the empty read must precede the non-empty read.
    \item Similarly, $t_{p_2\leftarrow r_1}$ < $t_{p_1\leftarrow r_1}$.
\end{enumerate}

By transitivity, from (1)-(3), it must be that $t_{p_1\leftarrow r_1} < t_{p_2\leftarrow r_1} $. This contradicts (4). No valid linearization order exists for the four reads.
\end{proof}

We are now ready to prove that Algorithm~\ref{alg:fast-non-eq} satisfies the properties of \neb{}.

\begin{lemma}[Validity]
If a correct process $s$ broadcasts $m$, then every correct process eventually delivers $m$.
\end{lemma}

\begin{proof}
Let $s$ be a correct sender that broadcasts $m$ and consider a correct receiver $p$ that tries to deliver $s$'s message. 

Since $s$ is correct, it writes $m$ to its message sub-slot.
Therefore, all replicators read $m$ and no other message from $s$, by Observation~\ref{obs:no-overwrite}.

If all replicators are correct and they copy $m$ in a timely manner, then $p$ is able to deliver $m$ via the fast path (at line~\ref{line:fast-return}). 

Otherwise $s$, being correct, will eventually write its (valid) signature of $m$ to its signature sub-slot. Since we consider at most $f$ Byzantine processes that replicate $m$, the $n-f$ correct replicators are guaranteed to copy the signature of $m$ to their slot. Moreover, since $s$ is correct, and we assume Byzantine processes cannot forge the digital signatures of correct processes, no replicator can produce a different message $m'\neq m$ with $s$'s signature. This enables receiver $p$ to deliver $m$ via the slow path (at line~\ref{line:slow-return}).
\end{proof}

\begin{lemma}[No duplication]
Every correct process delivers at most one message. 
\end{lemma}
\begin{proof}
Correct processes only deliver at lines~\ref{line:fast-return} or \ref{line:slow-return}. Immediately after a correct $p$ process delivers a message, $p$ exits the \textit{while} loop and thus will not deliver again.
\end{proof}

\begin{lemma}[Consistency]
If $p$ and $p'$ are correct processes, $p$ delivers $m$ and $p'$ delivers $m'$, then $m{=}m'$. 
\end{lemma}

\begin{proof}

Assume by contradiction that consistency does not hold; assume correct process $p$ delivers $m$, while correct process $p'$ delivers $m' \ne m$. 

Assume first \textit{wlog} that $p$ delivers $m$ using the fast path. Then $p$ must have seen $m$ in $n$ replicator slots.
Assume now that $p'$ also delivers $m'$ using the fast path; then, $p'$ must have seen $m'$ in $n$ replicator slots. This means that all $n$ replicators must have changed their written value, either from $m$ to $m'$, or vice-versa; this is impossible since at least $n-f$ of the replicators are correct and never change their written value (Observation~\ref{obs:no-overwrite}).
Process $p'$ must have then delivered $m'$ using the slow path instead; then, $p'$ must have seen signed copies of $m'$ in $n-f$ replicator slots. This means that $n-f$ replicators, including at least one correct replicator, must have changed their value from $m$ to $m'$, or vice-versa; this is impossible by Observation~\ref{obs:no-overwrite}.

So it must be that $p$ and $p'$ deliver $m$ and $m'$, respectively, using the slow path. In this case, $p$ sees signed copies of $m$ in $n-f$ slots, while $p'$ sees signed copies of $m'$ in $n-f$ slots.  Lemma~\ref{lemma:inversionScan} therefore applies: $p$ must also see a signed copy of $m'$ or $p'$ must also see a signed copy of $m$. Given there exists another validly signed value, the check at line~\ref{line:slow-check} fails for $p$ or $p'$. We have reached a contradiction: $p$ does not deliver $m$ or $p'$ does not deliver $m'$.
\end{proof}




\begin{lemma}[Integrity] 
If some correct process delivers $m$ and $s$ is correct, then $s$ previously broadcast $m$.
\end{lemma}

\begin{proof}
Let a correct receiver $p$ deliver a value, say $m \neq \bot$. To deliver, $m$ must either be (a) the value $p$ reads from the slots of all replicators (line~\ref{line:fast-return}) or (b) the signed value $p$ reads from the slots of at least $n-f$ replicators (line~\ref{line:slow-return}).
In both cases, for the delivery of $m$ to occur, at least one correct replicator $r$ contributes by writing value $m$ (unsigned in case (a) or signed in case (b)) to its slot. Given $r$ is a correct process, it must have copied the value it read from the sender's slot. Furthermore, a correct sender never writes any value unless it \textit{cb-broadcasts} it. Therefore, $m$ must have been broadcast by the sender $s$.
\end{proof}

\subparagraph*{Execution.} We provide an example of an execution breaking consistency when the collect operation is used instead of the scan operation in Algorithm~\ref{alg:fast-non-eq}.
Let there be a Byzantine sender $s$ and $n=3$ replicators. Let $p_1, p_2$ be two correct receivers, $r_1, r_2$ two correct replicators and let replicator $r_3$ be Byzantine. Initially, let $s$ write $m_1$ signed in its slot. Let receiver $p_2$ start its collect. It reads the slot of $r_1$ which it finds empty, and sleeps. Let $r_1, r_3$ copy $m_1$ signed in their slot, while $r_2$ sleeps. Let $p_1$ perform its collect, find two signed copies of $m_1$ and deliver $m_1$ via the check at line~\ref{line:slow-return}. Let $s$ change its value to $m_2$ signed, while $r_3$, being Byzantine, changes its value to $m_2$ signed. We resume $r_2$ and let it copy $m_2$ signed. We resume $p_2$'s collect, continuing to read $r_2,r_3$ slots, seeing two values of $m_2$ signed (recall it previously read $r_1$'s slot while it was empty) and delivering $m_2$ via the check at line~\ref{line:slow-return}.
\section{APPENDIX: \rb{} Correctness}\label{app:rb-correctness}

\begin{invariant}\label{invariant:rbValidSets}
    Let $S$ and $S'$ be two valid ReadySets for $m$ and $m'$, respectively. Then, $m=m'$.
\end{invariant}
\begin{proof}
    By contradiction. Assume there exist valid \textit{ReadySets} $S$ and $S'$ for different values $m\ne m'$.
    Set $S$ (resp. $S'$) consists of at least $n-f$ signed $m$ (resp. signed $m'$) messages. 
    Then there exist correct replicators $r$ and $r'$ such that $r$ writes $m$ and its signature to its \textit{Echo} slot and $r'$ writes $m'$ and its signature to its \textit{Echo} slot. This is impossible since correct replicators only write $*$ in their \textit{Echo} slots once they have \textit{cb-delivered} $\langle$\textsc{Init},$*\rangle$ from the sender. By the consistency property of \neb, $m$ must be equal to $m'$.
\end{proof}

\begin{lemma}[Validity]
If a correct process $s$ broadcasts $m$, then every correct process eventually delivers $m$.
\end{lemma}

\begin{proof}
Assume the sender $s$ is correct and broadcasts $m$. Let $p$ be a correct receiver that tries to deliver $s$'s message. 

Since the sender is correct, it \textit{cb-broadcasts} $\langle$\textsc{Init},$*\rangle$.
By the validity property of \neb, all correct replicators will eventually deliver $\langle$\textsc{Init},$*\rangle$ from $s$. Then, all correct replicators will write $m$ to their \textit{Echo} message sub-slots, compute a signature for $m$ and write it to their \textit{Echo} signature sub-slots. 
If all replicators are correct and they copy $m$ in a timely manner, then $p$ is able to deliver $m$ via the fast path (at line~\ref{line:rb-fast-return}). 

All correct replicators will eventually read each other's signed messages $m$; thus every correct replicator will be able to either (a) create a valid \textit{ReadySet} and write it to its \textit{Ready} slot or (b) copy a valid \textit{ReadySet} to its \textit{Ready} slot. Thus, $p$ will eventually be able to read at least $n-f$ valid \textit{ReadySets} for $m$ and deliver $m$ via the slow path (at line~\ref{line:rb-slow-return}).    
\end{proof}

\begin{lemma}[No duplication]
Every correct process delivers at most one message. 
\end{lemma}
\begin{proof}
Correct processes only deliver at lines~\ref{line:rb-fast-return} or \ref{line:rb-slow-return}. Immediately after a correct $p$ process delivers a message, $p$ exits the \textit{while} loop and thus will not deliver again.
\end{proof}

\begin{lemma}[Consistency]
If $p$ and $p'$ are correct processes, $p$ delivers $m$ and $p'$ delivers $m'$, then $m{=}m'$. 
\end{lemma}
\begin{proof}
    By contradiction. Let $p,p'$ be two correct receivers. Let $p$ deliver $m$ and $p'$ deliver $m' \neq m$. We consider 3 cases: (1) $p$ and $p'$ deliver their messages via the fast path, (2) $p$ and $p'$ deliver their messages via the slow path, and (3) (\textit{wlog}) $p$ delivers via the fast path and $p'$ delivers via the slow path.
    \begin{enumerate}
        \item[(1)] $p$ and $p'$ must have delivered $m$ and $m'$ respectively, by reading $m$ (resp. $m'$) from the \textit{Echo} slots of $n$ replicators. Thus, there exists at least one replicator $r$ such that $p$ read $m$ from $r$'s \textit{Echo} slot and $p'$ read $m'$ from $r$'s \textit{Echo} slot. This is impossible since correct replicators never overwrite their \textit{Echo} slots.
        \item[(2)] $p$ and $p'$ must have each read $n-f$ valid \textit{ReadySets} for $m$ and $m'$, respectively. This is impossible by Invariant~\ref{invariant:rbValidSets}.
        \item[(3)] $p'$ read at least one valid \textit{ReadySet} for $m'$. To construct a valid \textit{ReadySet}, one requires a signed set of $n-f$ values for $m'$. Thus, at least one correct replicator $r$ must have written $m'$ to its \textit{Echo} slot and appended a valid signature for $m'$. Process $p$ delivered $m$ by reading $m$ from the \textit{Echo} slots of all $n$ replicators, which includes $r$. This is impossible since correct replicators never overwrite their \textit{Echo} slots.
    \end{enumerate}
\end{proof}

\begin{lemma}[Integrity] 
If some correct process delivers $m$ and $s$ is correct, then $s$ previously broadcast $m$.
\end{lemma}
\begin{proof}
    Let $p$ be a correct receiver that delivers $m$ and let the sender $s$ be correct. We consider 2 cases: (1) $p$ delivers $m$ via the fast path and (2) $p$ delivers $m$ via the slow path.
    \begin{enumerate}
        \item[(1)] Fast Path. $p$ must have read $m$ from the \textit{Echo} slot of at least one correct replicator $r$. Replicator $r$ writes $m$ to its slot only upon \textit{cb-delivering} $\langle$\textsc{Init},$m\rangle$ from $s$. By the integrity property of \neb, $s$ must have broadcast $m$.
        Moreover, a correct sender only invokes \textit{cb-broadcast}($\langle$\textsc{Init},$m\rangle$) upon a \textit{rb-broadcast} event for $m$.
        \item[(2)] Slow Path. $p$ must have read at least one valid \textit{ReadySet} for $m$. A \textit{ReadySet} consists of a signed set of $n-f$ values for $m$. Thus, at least one correct replicator $r$ must have written $m$ signed to its \textit{Echo} slot. The same argument as in case (1) applies.
    \end{enumerate}
\end{proof}

\begin{lemma}[Totality] 
If some correct process delivers $m$, then every correct process eventually delivers a message.
\end{lemma}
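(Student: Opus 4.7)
The plan is to case-split on how the first correct process $p$ delivered $m$---either via the fast path (line~\ref{line:rb-fast-return}) or the slow path (line~\ref{line:rb-slow-return})---and in each case show that eventually at least $n-f$ valid ReadySets for $m$ appear in the Ready slots, at which point every correct receiver will deliver $m$ via the slow path. Invariant~\ref{invariant:rbValidSets} will be crucial: it guarantees that whichever valid ReadySets exist in the system, they all pertain to the same value $m$, so no ``wrong'' ReadySet can be propagated by the background task.

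In the fast-path case, $p$ read $m$ from all $n$ Echo slots. Since at most $f$ replicators are Byzantine and correct replicators never overwrite their slots (Observation~\ref{obs:no-overwrite}), at least $n-f$ correct replicators must have written $m$ to their Echo \t{msg} sub-slot, and each such correct replicator will subsequently compute and write a valid signature of $m$ to its Echo \t{sgn} sub-slot in the background. Thus every correct replicator can eventually read $n-f$ signed copies of $m$, construct its own valid ReadySet, and write it to its Ready slot at line~\ref{line:ReadySet}. Once $n-f$ such ReadySets are present, any correct receiver that has not yet delivered will read them and deliver $m$ via the slow path.

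In the slow-path case, $p$ read $n-f=f+1$ valid ReadySets for $m$, so at least one of them---call it $R$---was written by a correct replicator $r$. Because $r$ is correct, $R$ remains in $r$'s Ready slot forever. Now consider any other correct replicator $r'$. If $r'$ has already set $\t{ready}$ to $\mathsf{T}$, it has written some valid ReadySet to its own Ready slot; by Invariant~\ref{invariant:rbValidSets}, this ReadySet is also for $m$. Otherwise, the background loop at $r'$ will eventually read $R$ (or another valid ReadySet) and copy it to $r'$'s Ready slot. Either way, every correct replicator eventually has a valid ReadySet for $m$ in its Ready slot, giving at least $n-f$ valid ReadySets for $m$ in memory. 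Every correct receiver that has not yet delivered will then satisfy the condition at line~\ref{line:rb-slow-return} and deliver $m$.

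The main subtlety is ruling out interference from Byzantine replicators who could write malformed or outdated content to their Ready slots. Here Invariant~\ref{invariant:rbValidSets} does the heavy lifting: any ReadySet that passes the validity check is necessarily a signed set of $n-f$ Echo values for the same $m$, and such signatures can only be produced by correct replicators that delivered $\langle\textsc{Init},m\rangle$ from $s$ via the underlying \neb{} instance. Thus Byzantine replicators cannot forge a valid ReadySet for a conflicting value, and their possible tampering cannot prevent the $n-f$ correct replicators from eventually making valid ReadySets for $m$ visible to every correct receiver.
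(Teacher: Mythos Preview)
Your proof is correct and follows essentially the same approach as the paper's: the same fast-path/slow-path case split, the same reliance on Invariant~\ref{invariant:rbValidSets} to ensure all valid ReadySets are for $m$, and the same conclusion that $n-f$ correct replicators eventually hold valid ReadySets so every correct receiver delivers via the slow path. The paper's fast-path case additionally mentions the alternative of copying a ReadySet via the background task, but since in that case all correct replicators have cb-delivered and can construct their own ReadySet, your omission of that alternative is harmless.
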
    
\begin{proof}
    Let $p$ be a correct receiver that delivers $m$. We consider 2 cases: (1) $p$ delivers $m$ via the fast path and (2) $p$ delivers $m$ via the slow path.
    \begin{enumerate}
        \item[(1)] Fast Path. $p$ must have read $m$ from the \textit{Echo} slots of all $n$ replicators, which include $n-f$ correct replicators. These $n-f$ correct replicators must eventually append their signature for $m$. Every correct replicator looks for signed copies of $m$ in other replicators' \textit{Echo} slots. Upon reading $n-f$ such values, each correct replicator is able to construct and write a valid \textit{ReadySet} to its \textit{Ready} slot (or copy a valid \textit{ReadySet} to its \textit{Ready} slot from another replicator). Thus, every correct receiver will eventually read $n-f$ valid \textit{ReadySets} for $m$ and deliver $m$ via the slow path.
        \item[(2)] Slow Path. $p$ must have read valid \textit{ReadySets} for $m$ from the slots of $n-f$ replicators, which must include at least one correct replicator $r$. Since $r$ is correct, $r$ will never remove its \textit{ReadySet} for $m$. Thus, every correct replicator will eventually either (a) copy $r$'s \textit{ReadySet} to their own \textit{Ready} slots or (b) construct and write a \textit{ReadySet} to their \textit{Ready} slots. Note that by Invariant~\ref{invariant:rbValidSets}, all valid \textit{ReadySets} must be for the same value $m$. Thus, every correct receiver will eventually read $n-f$ valid \textit{ReadySets} for $m$ and deliver $m$ via the slow path.
    \end{enumerate}
\end{proof}    

\section{APPENDIX: Byzantine Consensus Correctness and Additional Details}\label{app:cons-correctness}
\subsection{Valid messages}\label{app:cons-validMsgs}
A $\langle$\prepare, \textit{view, val, proof}$\rangle$ message is considered valid by a (correct) process if:
\begin{itemize}
    \item the process is part of \textit{view},
    \item the broadcaster of the \prepare is the coordinator of \textit{view}, i.e., \textit{view}$\;\%\;n$,
    \item when \textit{view} $=0$, \textit{proof}$=\emptyset$ and \textit{val} can be any value $\neq \bot$
    \item when \textit{view} $>0$, the estimate matches the highest view tuple in \textit{proof} and the \textit{proof} set is valid, i.e., it contains a set of $n-f$ non-conflicting view-change certificates for view \textit{view}; in case all tuples in \textit{proof} are still the init value $(0,\bot,\emptyset)$, any estimate is a valid estimate,
    \item the process did not previously accept a different \prepare in \textit{view}.
\end{itemize}

A $\langle$\commit, \textit{view, val}$\rangle$ message is considered valid by a (correct) process if:
\begin{itemize}
    \item the process is part of \textit{view},
    \item \textit{val} can be any estimate,
    \item the broadcaster did not previously send a view change message for \textit{view$'$ $>$ view},
    \item the broadcaster did not previously send another \commit message for \textit{val$'$ $\neq$ val} in the same \textit{view}. 
\end{itemize}

A $\langle$\vc, \textit{view$+1$, (view\textsubscript{val}, val, proof\textsubscript{val})}$\rangle$ message from process $j$ is considered valid by a (correct) process if:
\begin{itemize}
    \item \textit{val} $\in$ \textit{(view\textsubscript{val}, val, proof\textsubscript{val})} corresponds to the latest non-empty value broadcast in a $\langle$\commit, \textit{view\textsubscript{c}, val\textsubscript{c}}$\rangle$, \textit{val} $=$ \textit{val\textsubscript{c}} and \textit{view\textsubscript{val}} $=$ \textit{view\textsubscript{c}} ($\leq$ \textit{view}) and \textit{proof\textsubscript{val}} is a valid proof for \textit{val} (either consists of non-conflicting certificates that support \textit{val} as highest view-tuple or all tuples are with their init value; all \vc and \vcack messages must be for v\textit{iew\textsubscript{val}}),
    \item \textit{val} $\in$ \textit{($0$, val, proof\textsubscript{val})}, \textit{proof\textsubscript{val}} is $\emptyset$,
    \item if for each view \textit{view}$'$ $\leq$ \textit{view}, $\langle$\commit, \textit{view}$'$, $\bot\rangle$ from $j$ are empty; then \textit{(view\textsubscript{val}, val, proof\textsubscript{val})} must be equal to $(0,\bot,\emptyset)$,
    \item $j$ must have sent a single \commit message each view \textit{view}$'$ $\leq$ \textit{view},
    \item $j$ did not send another \vc message this view, \textit{view}$+1$.
\end{itemize}

\subsection{Agreement}
\begin{lemma}\label{invariant:n_sameview}
In any view \textit{v}, no two correct processes accept \textup{\prepare} messages for different values \textit{val} $\neq$ \textit{val$'$}.
\end{lemma}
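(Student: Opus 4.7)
The plan is to reduce the claim to a single application of the Consistency property of \neb{}, which is the primitive used by the primary to disseminate its \prepare{} message at line~\ref{line:bdcastprepare}.

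First I would observe that, by line~\ref{line:n_acceptPhase1} of Algorithm~\ref{alg:consensus}, a correct process only accepts a \prepare{} for view $v$ that it $cb$-delivered from the primary $p_v = v \bmod n$; moreover, the validity conditions for \prepare{} messages in Appendix~\ref{app:cons-validMsgs} stipulate that a correct process does not accept more than one \prepare{} per view. Hence if two correct processes $p_i$ and $p_j$ accept \prepare{} messages in view $v$ carrying values $val$ and $val'$ respectively, both must have $cb$-delivered those messages from the same \neb{} instance---namely, the one used by $p_v$ to broadcast its view-$v$ proposal at line~\ref{line:bdcastprepare}.

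Next I would invoke the Consistency property of \neb{} (Definition~\ref{def:neb}): any two correct processes that deliver in the same instance deliver the same message, and crucially this guarantee holds even when the sender (here, the primary) is Byzantine. Applying this to $p_i$ and $p_j$ yields $val = val'$, contradicting $val \neq val'$ and establishing the lemma.

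The hard part here is minimal, because the real work was pushed into the design of the \neb{} abstraction and the choice to broadcast \prepare{} over it rather than over plain point-to-point messages. The only care needed is to confirm that $p_i$ and $p_j$ are reasoning about the same \neb{} instance; this follows from the fact that each cb-broadcast corresponds to a distinct instance (indexed by sender and sequence number, with FIFO delivery) and that within a given view there is at most one \prepare{} broadcast by the primary. Consequently, no new indistinguishability machinery---of the kind used for the broadcast lower bounds in Section~\ref{sec:lowerbounds}---is required: this lemma is essentially the payoff of using \neb{} as the consensus communication primitive in the first place.
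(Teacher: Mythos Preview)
Your argument has a real gap: you assume that the two accepted \prepare{} messages come from the \emph{same} \neb{} instance, justifying this by the claim that ``within a given view there is at most one \prepare{} broadcast by the primary.'' That claim holds only when the primary is correct. A Byzantine primary $p_v$ may invoke \textit{cb-broadcast} twice, with sequence numbers $k \ne k'$, sending $\langle\textsc{Prepare}, v, val, \ldots\rangle$ in instance $k$ and $\langle\textsc{Prepare}, v, val', \ldots\rangle$ in instance $k'$. These are distinct \neb{} instances, so the Consistency property of Definition~\ref{def:neb} gives you nothing across them: it is perfectly consistent with \neb{} that $p_i$ delivers $val$ from instance $k$ and $p_j$ delivers $val'$ from instance $k'$.

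The paper closes this gap by a case split on the primary. For a correct primary the single-instance argument (essentially yours) works. For a faulty primary it considers the two sequence numbers $k$ and $k'$: if $k<k'$, FIFO delivery forces every correct process to deliver the $k$-th message first, and the validity rule ``the process did not previously accept a different \prepare{} in \textit{view}'' ensures that once a correct process accepts the first \prepare{} it ignores the second; symmetrically for $k>k'$. Only in the equivocation case $k=k'$ does \neb{} Consistency do the work. So FIFO delivery and the at-most-one-\prepare{}-per-view acceptance rule are not incidental bookkeeping but essential ingredients; your sketch mentions FIFO but does not actually use it, and that is where the argument breaks.
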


\begin{proof}
Let $i,j$ be two correct processes. Any correct process accepts a \prepare messages only from the current view's primary (line~\ref{line:n_acceptPhase1}). 

A correct primary $p$ never broadcasts conflicting \prepare messages (i.e., same view \textit{v}, but different estimates \textit{val, val$'$, val $\neq$ val$'$}). This means, $i,j$ must receive the same \prepare message. 
By Lemma~\ref{lemma:always_valid}, both $i$ and $j$ consider the \prepare message from $p$ valid.


A faulty primary $p'$ may broadcast conflicting \prepare messages. 
Assume the primary broadcasts ($k$, $\langle$\prepare, \textit{v, val, proof}$\rangle$) and ($k'$,$\langle$\prepare, \textit{v, val$'$, proof$'$}$\rangle$) where $k, k'$ are the broadcast sequence numbers used. We distinguish between the following cases:
\begin{enumerate}
    \item $k < k'$: By the FIFO property, any correct replica must process message $k$ of $p'$ before processing message $k'$. If process $i$ accepts the $k^{\text{th}}$ message of $p'$, following the consensus protocol, $i$ will not accept a second \prepare message in the same view \textit{v}, i.e. message $k'$. Similarly for correct replica $j$.\label{2.1:  n_case1}
    \item $k > k'$: The argument is similar to~(\ref{2.1:  n_case1}).
    \item $k = k'$: In this case, $p'$ equivocates. If a message gets delivered by both $i$ and $j$, then the message is guaranteed to be the same by the consistency property of \neb.
\end{enumerate}
We conclude correct replicas agree on the \prepare message accepted within the same view.
\end{proof}

\begin{lemma} \label{invariant:n_decideinview}
In any view \textit{v}, no two correct processes call \textit{try\_decide} with different values \textit{val} and \textit{val$'$}.
\end{lemma}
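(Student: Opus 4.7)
The plan is to reduce this lemma directly to Lemma~\ref{invariant:n_sameview}, by showing that the value used in \textit{try\_decide} is precisely the value extracted from the accepted \prepare{} message. The Commit-phase condition (line~\ref{line:n_decideCondition}) only serves to confirm that enough replicas observed the same estimate; the decided value itself comes from $\t{aux}$, which was set at line~\ref{line:auxupdate} upon receipt of a valid \prepare.

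The steps I would carry out are as follows. First, suppose for contradiction that two correct processes $i$ and $j$ both invoke \textit{try\_decide} in view $v$, with arguments $\textit{val}$ and $\textit{val}'$ respectively, where $\textit{val} \neq \textit{val}'$. Second, inspect the condition at line~\ref{line:n_decideCondition}: for $i$ to pass this guard and call \textit{try\_decide}(\textit{val}), we must have $\t{aux}_i = \textit{val} \neq \bot$. The only line that can assign a non-$\bot$ value to $\t{aux}_i$ is line~\ref{line:auxupdate}, which is executed only after $i$ accepts a valid $\langle \textsc{Prepare}, v, \textit{val}, \textit{proof} \rangle$ message from the primary of view $v$. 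By symmetry, $j$ must have accepted a valid $\langle \textsc{Prepare}, v, \textit{val}', \textit{proof}' \rangle$.

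Third, apply Lemma~\ref{invariant:n_sameview}, which guarantees that in view $v$, no two correct processes accept \prepare{} messages for different values. Since $i$ accepted a \prepare{} for $\textit{val}$ and $j$ accepted a \prepare{} for $\textit{val}'$, we must have $\textit{val} = \textit{val}'$, contradicting our assumption.

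I do not expect any real obstacle here; the only subtle point is to confirm that no other code path can set $\t{aux}_i$ to a non-$\bot$ value inside a view (in particular, nothing in the Commit phase reassigns $\t{aux}_i$), so the chain $\textit{try\_decide}(\textit{val}) \Rightarrow \t{aux}_i = \textit{val} \Rightarrow i \text{ accepted a valid \prepare{} for } \textit{val} \text{ in view } v$ is tight. Once this bookkeeping observation is in place, Lemma~\ref{invariant:n_sameview} closes the argument immediately. Note that we do \emph{not} need the Commit count $\#_{\text{val}}(R_i) \geq n-f$ for agreement within a view; that predicate is needed later, when arguing that a decided value is preserved across view changes.
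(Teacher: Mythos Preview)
Your proposal is correct and follows essentially the same approach as the paper: both argue by contradiction, observe that the guard at line~\ref{line:n_decideCondition} forces $\t{aux}_i=\textit{val}$ (set only upon accepting a valid \prepare), and then invoke Lemma~\ref{invariant:n_sameview} to conclude $\textit{val}=\textit{val}'$. Your write-up is slightly more explicit about why no other assignment to $\t{aux}_i$ can interfere, but the underlying argument is identical.
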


\begin{proof}

By contradiction. 
Let $i,j$ be two correct processes. Assume in view \textit{v}, processes $i,j$ call \textit{try\_decide} with value \textit{val}, respectively \textit{val$'$}.
To call \textit{try\_decide}, the condition at line~\ref{line:n_decideCondition} must be true for both $i$ and $j$. This means $i$, (resp. $j$) accepts a valid \prepare message supporting \textit{val} (resp. \textit{val$'$}) and a set of $n-f$ \commit messages supporting \textit{val} (resp. \textit{val$'$}). By Lemma~\ref{invariant:n_sameview}, correct processes cannot accept different \prepare messages and consequently cannot call \textit{try\_decide} with different values since \textit{aux$_i$ = aux$_j$}.
\end{proof}

\begin{lemma}  \label{invariant:n_sameEstVal}
Let a correct process $i$ decide \textit{val} in view \textit{v}. For view \textit{v+1}, no valid proof can be constructed for a different estimate \textit{val$'$ $\neq$ val}.
\end{lemma}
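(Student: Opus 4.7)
The plan is to assume for contradiction that there is a valid proof $\Psi$ for view $v{+}1$ whose highest-view tuple carries some estimate $val' \neq val$, and derive a contradiction by counting which processes can validly contribute a view-change certificate to $\Psi$. First I would exploit the decision event: since $i$ decided $val$ in view $v$, by the condition on line~\ref{line:n_decideCondition} $i$ must have received $n{-}f = f{+}1$ valid \commit{} messages for $val$ in view $v$; let $Q$ denote the set of senders of these \commit{} messages, so $|Q| \geq f{+}1$.

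Next I would argue that every process $p \in Q$ is ``locked in'' to the view-change tuple $(v, val, \cdot)$ as the only tuple that can appear in a valid \vc{} from $p$ for view $v{+}1$. For a correct $p \in Q$, this is immediate from Lemma~\ref{invariant:n_sameview}: $p$ must have accepted the \prepare{} for $val$ in view $v$ (otherwise $p$ would have broadcast $\bot$ in Phase~2 rather than \commit{} for $val$), and so $p$ executed line~\ref{line:vctuple}, assigning $vc_p = (v, val, \mathit{proof})$. For a Byzantine $p \in Q$, the argument is more delicate: the consistency property of \neb{} together with FIFO delivery across \neb{} instances forces every correct process to see $p$'s \commit{} broadcasts in view $v$ in the same $k$-order, and since $i$ considered one of $p$'s non-empty \commit{}s for $val$ to be valid, that \commit{} must be the first non-empty one in $p$'s sequence in view $v$; the \commit{} validity rule in Appendix~\ref{app:cons-validMsgs} (which invalidates any subsequent \commit{} for a different value in the same view) then pins the latest valid non-empty \commit{} of $p$ in view $v$ to $(v, val)$, so any tuple in a \vc{} from $p$ other than $(v, val, \cdot)$ fails the \vc{} validity criterion and cannot appear in a valid certificate.

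Finally I would conclude by counting. Let $v_h$ denote the highest view appearing among the tuples of $\Psi$. If $v_h = v$, some certificate in $\Psi$ carries $(v, val', \cdot)$, so by the non-conflicting requirement no certificate in $\Psi$ may carry $(v, val, \cdot)$; if $v_h < v$, no certificate in $\Psi$ carries a tuple with $\mathit{view\_val} = v$ at all. Either way, no process in $Q$ can supply the \vc{} of any certificate in $\Psi$, so $\Psi$'s $f{+}1$ certificates must come from the at most $2f{+}1 - |Q| \leq f$ processes outside $Q$---one short, a contradiction. The hardest part of this plan will be the Byzantine step in the middle: I must convince the reader, using only \neb{}'s consistency together with the FIFO assumption, that equivocation across distinct \neb{} instances cannot let a Byzantine process in $Q$ carry a view-change tuple different from $(v, val, \cdot)$ that survives the validity check; once this is established, the counting argument is clean and independent of asynchrony.
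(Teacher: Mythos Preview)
Your plan is correct and matches the paper's proof: both exploit that the $n{-}f$ processes that committed $val$ in view $v$ cannot also supply a view-change certificate carrying a different tuple, and then close by quorum intersection (your counting phrasing is simply the contrapositive of the paper's single-intersection-process argument). For the Byzantine step you flag as hardest, the paper does exactly what you anticipate---an explicit three-way split on the relative broadcast sequence numbers $k_c$ versus $k_{vc}$ of the \commit{} and the \vc{}, using FIFO plus \neb{} consistency; the one ingredient your sketch leaves implicit is that a valid certificate requires at least one \emph{correct} \vcack{}, which is where the \vc{} validity check actually bites a Byzantine $p\in Q$.
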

\begin{proof}
By contradiction.
Let $i$ decide \textit{val} in view \textit{v}. 
Assume the contrary and let there be a valid proof such that \textit{(v+1, val$'$, proof)}. 

Given \textit{v+1} $> 0$, \textit{proof} cannot be $\emptyset$. It must be the case that the proof supporting \textit{val$'$} consists of a set of $n-f$ non-conflicting view-change certificates. Each view-change certificate consists of a \vc message with format $\langle$\vc, \textit{v+1}, \textit{(view, value, proof\textsubscript{val})}$\rangle$ and $f$ corresponding \vcack messages.
Any view-change certificate requires the involvement of at least one correct replica, namely, either a correct replica is the broadcaster of a \vc message or a correct replica validates a \vc message, by sending a corresponding \vcack. 

For \textit{val$'$} to be consistent with \textit{proof}, \textit{proof} must contain either (a) at least one view-change certificate with tuple \textit{(v, val$'$, proof\textsubscript{val$'$})} and no other view-change certificate s.t. its tuple has a different value for the same view, \textit{v}, i.e., $\not\exists$ \textit{(v, val, proof\textsubscript{val})}, with \textit{v} the highest view among the $n-f$ tuples or (b) only view-change certificates with tuples having the initial value $(0,\bot,\emptyset)$ so that any value is a valid value. 
Let $R_1$ denote the set of processes that contributed with a \vc message, which is then part of a view-change certificate in \textit{proof}.

Given $i$ decided \textit{val} in view \textit{v}, $i$ received $n-f$ \commit messages for \textit{val} (line~\ref{line:n_decideCondition}). Such processes must have received a valid \prepare message and updated their view-change tuple together with their auxiliary in lines~\ref{line:auxupdate} and~\ref{line:vctuple}, before sending a \commit message. Let $R_2$ denote the set of processes that contributed with a \commit message for \textit{val}.

These two sets, $R_1$ and $R_2$, must intersect in at least one replica $j$.
Replica $j$ must have used \neb{} for its view-change message: ($k_{vc}$, $\langle$\vc, \textit{v+1, (v, val$'$, proof\textsubscript{val$'$})}$\rangle$); the argument is similar  for the case ($k_{vc}$, $\langle$\vc, \textit{v+1, $(0,\bot,\emptyset)$})$\rangle$, where $k_{vc}$ is the broadcast sequence number used; otherwise it could have not gathered enough \vcack{s}, since correct replicas do not accept messages not delivered via the broadcast primitive. Similarly, $j$ must have used \neb{} for its \commit message: ($k_{c}$, $\langle$\commit, \textit{v, val}$\rangle$), where $k_{c}$ is the broadcast sequence number used; otherwise $i$ would not have accepted the \commit message.

If $j$ is correct, and sends a \commit message for \textit{val}, it broadcasts a \vc message with its true estimate, \textit{val}. Hence, the $R_1$ set of non-conflicting view-change messages must contain a tuple \textit{(v, val, proof\textsubscript{val})}. This yields either a set of conflicting view-change certificates if $\exists$ another view-change certificate for \textit{(v, val$'$, proof\textsubscript{val$'$})}, or a conflict between \textit{proof} and \textit{val$'$} as matching estimate (since \textit{v} is the highest-view and the value associated with this tuple corresponds to estimate \textit{val} and not \textit{val$'$}).

If $j$ is Byzantine, we distinguish between the following cases:
\begin{enumerate}
    \item $k_{c}$ < $k_{vc}$ ($j$ broadcasts its \commit message before it broadcasts its \vc message). In this case, no correct process sends a \vcack for $j$'s \vc message. By the FIFO property, a correct process first delivers the $k_{c}$ message and then $k_{vc}$ message. In order to validate a \vc message, the last non-empty value broadcast in a \commit must correspond to the value broadcast in the \vc. Since these do not match, no correct process sends a \vcack for $j$'s \vc. Hence, the \vc message of $j$ does not gather sufficient ACKs to form a view-change certificate and be included in \textit{proof}.\\
    \textit{Note:} If $j$ were to broadcast two \commit messages in view \textit{v}, one supporting \textit{val} and another supporting \textit{val$'$} before broadcasting its \vc message supporting \textit{val$'$}, no correct process ACKs its \vc message since $j$ behaves in a Byzantine manner, i.e., no correct process broadcasts two (different) \commit messages within the same view.
    \item $k_{vc}$ < $k_{c}$ ($j$ broadcasts its \vc message before it broadcasts its \commit message). In this case, process $i$ must have first delivered the \vc message from $j$. Consequently, $i$ does not accept $j$'s \commit message as valid. This contradicts our assumption that $i$ used this \commit message to decide \textit{val}.
    \item $k_{vc}=k_{c}$ ($j$ equivocates). By the properties of \neb, correct processes either deliver $j$'s \commit message, case in which the \vc message does not get delivered by any correct replica, and consequently does not gather sufficient \vcack to form a view-change certificate (for neither \textit{val$'$} nor $\bot$); or correct processes deliver $j$'s \vc message, case in which the \commit message does not belong to $R_2$, $i$ does not decide.
\end{enumerate}

We conclude, if $i$ decided \textit{val} in view \textit{v}, no valid proof can be constructed for view \textit{v+1} and \textit{val$'$ $\neq$ val}.
\end{proof}




\begin{lemma}\label{invariant:n_sameval}
Let a correct process $i$ decide \textit{val} in view \textit{v}. For any subsequent view \textit{v$'$ $>$ v}, no valid proof can be constructed for a different estimate \textit{val$'$ $\neq$ val}.
\end{lemma}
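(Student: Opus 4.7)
The plan is to prove Lemma~\ref{invariant:n_sameval} by induction on $v' > v$. The base case $v' = v+1$ is exactly Lemma~\ref{invariant:n_sameEstVal}. For the inductive step, I fix $v' > v+1$, assume the statement holds for every intermediate view $w$ with $v < w < v'$, and suppose for contradiction that a valid proof $P$ for some $val' \neq val$ can be constructed for view $v'$.

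My first move is to pin down what the inductive hypothesis tells us about the intermediate views. For every $w$ with $v < w < v'$, no valid proof for $val' \neq val$ exists. Since a \prepare message in any view $w > 0$ requires a valid proof in its \textit{proof} field (and correct processes refuse \prepare messages whose proof is invalid), no correct process can accept a \prepare whose estimate differs from $val$ in any intermediate view. Combined with Lemma~\ref{invariant:n_sameview}, this means that in every view $w \in (v, v')$, whenever a correct process updates its view-change tuple at line~\ref{line:vctuple}, the new tuple has value $val$, never $val'$.

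Next I reuse the quorum-intersection skeleton from Lemma~\ref{invariant:n_sameEstVal}. Let $R_1$ denote the $n-f$ processes whose \vc messages appear in the certificates of $P$, and let $R_2$ denote the $n-f$ processes whose \commit messages for $val$ in view $v$ were used by $i$ to decide. Since $|R_1|+|R_2| = 2(n-f) = 2f+2 > n$, there is at least one process $j \in R_1 \cap R_2$. If $j$ is correct, its view-change tuple at the moment it broadcast its \vc for view $v'$ has value $val$ and view $\geq v$ (by the observation above); for $P$ to select $val'$, its highest-view tuple must have view $> v$ and value $val'$, which by the inductive hypothesis cannot carry a valid $proof_{val'}$, contradicting the validity requirement on \vc messages. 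If $j$ is Byzantine, I reproduce the three-case analysis of Lemma~\ref{invariant:n_sameEstVal} on the broadcast sequence numbers $k_c$ of $j$'s \commit and $k_{vc}$ of $j$'s \vc, invoking the FIFO-delivery assumption and the consistency of \neb{}; the reasoning there does not depend on $v'$ being $v+1$ rather than a later view, so it carries over.

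The main obstacle will be the Byzantine sub-case, and specifically confirming that the validity hook ``the value in the \vc tuple must match the broadcaster's latest non-empty \commit'' still delivers the contradiction when arbitrarily many views separate $v$ from $v'$; in particular, I must argue that even if $j$ sent additional \commit messages in intermediate views, any valid sequence of such commits could not carry a value different from $val$ (again by the inductive hypothesis, since committing $val'$ in view $w \in (v,v')$ would require $j$ to have accepted a \prepare for $val'$, which needs a valid proof that does not exist). Once that is nailed down, every case yields a contradiction, so no valid proof for $val' \neq val$ can be constructed for view $v'$, completing the induction.
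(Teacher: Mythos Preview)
Your plan is essentially the paper's: induction on $v'$ (the paper phrases it as taking the minimal offending view), the quorum intersection $R_1\cap R_2$ with the committers from view $v$, and the FIFO/\neb{}-consistency case analysis on $j$'s messages. The paper's case split (``some tuple carries $val'$'' versus ``all tuples are $(0,\bot,\emptyset)$'') differs superficially from yours (correct $j$ versus Byzantine $j$), but the underlying mechanics coincide.

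There is, however, a genuine slip in your Byzantine sub-case. You argue that $j$'s intermediate \commit{} messages cannot carry $val'$ because ``committing $val'$ in view $w\in(v,v')$ would require $j$ to have accepted a \prepare{} for $val'$.'' A Byzantine $j$ needs no \prepare{} to broadcast a \commit{}, and the \commit{} validity rules in Appendix~\ref{app:cons-validMsgs} do not check for one. The hook you actually need is one level up, in the \vc{} validity check: if $j$'s latest non-empty \commit{} before its \vc{} for view $v'$ is $\langle\commit,w,val'\rangle$ with $w>v$, then any correct process validating $j$'s \vc{} requires the tuple $(w,val',proof_{val'})$ with $proof_{val'}$ a \emph{valid proof} for $val'$ in view $w$. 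By the inductive hypothesis no such proof exists, so no correct \vcack{} is issued and $j$'s certificate cannot form. This is exactly how the paper closes its case~(b).

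One further gap to close: once the Byzantine analysis establishes that $j$'s certificate in $P$ carries value $val$ at some view $\ge v$, you must still invoke the highest-view-tuple argument you gave in the correct-$j$ case to rule out \emph{other} certificates in $P$ carrying $val'$ at a view above $j$'s. The three-case analysis from Lemma~\ref{invariant:n_sameEstVal} constrains only $j$'s own certificate; in the single-step lemma that suffices because $v$ is the only possible tuple view, but here it does not.
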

\begin{proof}
We distinguish between the following two cases: (1) \textit{v$'$ $=$ v+1} and (2) \textit{v$'$ $>$ v}.

\textbf{Case 1:} Follows from Lemma~\ref{invariant:n_sameEstVal}.

\textbf{Case 2:} By contradiction.
Let process $i$ decide \textit{val} in view \textit{v}.
Assume the contrary and let \textit{v$'$ $>$ v} be the lowest view in which there exists a valid proof for \textit{val$'$ $\neq$ val}, i.e., \textit{(v$'$,val$'$, proof)}. 

A valid proof supporting \textit{val$'$} must contain $n-f$ non-conflicting view-change certificates out of which (a) one view-change certificate supports \textit{val$'$} or (b) all view-change certificates claim $\bot$.
A view-change certificate consists of a \vc message and $f$ \vcack messages. This means, at least one correct process must validate a \vc message by broadcasting a \vcack message, or be the producer of a \vc message.

(a) For \textit{val$'$} to be the representative value of the $n-f$ view-change certificates in \textit{proof}, one of the view-change messages must contain a tuple with the highest view among all $n-f$ tuples. Let this tuple be \textit{(v$'-1$, val$'$, proof\textsubscript{val$'$})} such that \textit{v$'-1$} is the highest view possible before entering \textit{v$'$}.
This tuple must then come from view \textit{v$'-1$} with a valid proof, \textit{proof\textsubscript{val$'$}} supporting the fact that \textit{val$'$} is a valid value.

By assumption, the only valid proof that can be constructed in views prior to \textit{v$'$} but succeeding \textit{v} is for estimate \textit{val}. Hence, there is no valid proof, \textit{proof\textsubscript{val$'$}} for \textit{val$'$} in view \textit{v$'-1$}. 
In the case in which the producer of the \vc is correct, it will not construct a \vc message with an invalid proof. The \textit{vc\textsubscript{i}} variable is only updated if the \prepare message is valid.
In the case in which the producer of the \vc is faulty, it will not gather the necessary \vcack to form a view-change certificate given correct replicas do not validate a \vc message with an invalid proof or in which the estimate value contradicts the proof.
This contradicts our initial assumption that there exists a valid view-change certificate supporting \textit{val$'$}.

(b) All \vc messages in \textit{proof} have tuples $(0,\bot,\emptyset)$ so that any estimate value is valid value. Let this set be denoted by $R_1$.
Since $i$ decided \textit{val} in view \textit{v}, a set of $R_2$ replicas contributed with a \commit value for \textit{val}.
Sets $R_1$ and $R_2$ must intersect in one replica, say $j$.  
If $j$ sends \commit messages in subsequent views for a value $\bot$, $j$ must send a \vc message matching its latest non-empty \commit message, i.e., $\langle$\vc, \textit{v$'$, (v, val, proof\textsubscript{val}})$\rangle$, in order to gather sufficient \vcack and hence form a view-change certificate.
If $j$ sends a \commit message in any subsequent view for a value \textit{val$'$} $\neq \bot$, the only possible valid proof is for value \textit{val}, see case (a). 
Whichever the case, at least one view-change certificate in \textit{proof} must contain a view-change message with a non-empty tuple which contradicts our assumption that all view-change certificates are for $\bot$.
\end{proof}

\begin{theorem}[Agreement]
If correct processes $i$ and $j$ decide \textit{val} and \textit{val$'$}, respectively, then \textit{val $=$ val$'$}.\end{theorem}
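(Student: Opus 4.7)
The plan is to derive agreement directly from the two main invariants already established: Lemma~\ref{invariant:n_decideinview} (same-view agreement) and Lemma~\ref{invariant:n_sameval} (valid proofs in later views preserve the decided value). I split into two cases based on whether $i$ and $j$ decide in the same view or different views.

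For the same-view case, let $v$ be the view in which both $i$ and $j$ decide. A correct process decides at line~\ref{line:n_finalDecide} only via a call to \trydel at line~\ref{line:n_decision}, which can only occur within the view where the decision condition at line~\ref{line:n_decideCondition} is met. So both $i$ and $j$ called \trydel during view $v$, and Lemma~\ref{invariant:n_decideinview} immediately yields $val = val'$.

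For the different-views case, assume without loss of generality that $i$ decides in view $v$ and $j$ decides in view $v' > v$. I would argue as follows: for $j$ to satisfy the decision condition at line~\ref{line:n_decideCondition} in view $v'$, $j$ must have set $\textit{aux}_j = val'$ at line~\ref{line:auxupdate} upon receiving a valid $\langle$\prepare$, v', val', \textit{proof}\rangle$ from the primary of $v'$. Since $v' > v \geq 0$, the validity conditions for this \prepare message (Appendix~\ref{app:cons-validMsgs}) require that either $\textit{proof}$ is a set of $n-f$ non-conflicting view-change certificates whose highest-view tuple has value $val'$, or all tuples in $\textit{proof}$ are the initial $(0,\bot,\emptyset)$ (in which case any value is admissible). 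In either case, $\textit{proof}$ together with $val'$ constitutes a valid proof for view $v'$ in the sense of Lemma~\ref{invariant:n_sameval}. By that lemma, applied to the decision of $i$ at view $v$, no valid proof for any view $v' > v$ can support an estimate $val' \neq val$. Hence $val' = val$.

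The main obstacle would normally be the cross-view case, but essentially all of its difficulty is already absorbed into Lemma~\ref{invariant:n_sameval}. The only remaining care is to confirm that $j$'s decision truly forces the existence of a valid proof for its estimate in view $v'$ — which follows from the fact that a correct process only adopts a value at line~\ref{line:auxupdate} after validating the \prepare, and only decides after matching its own $\textit{aux}_j$ against the $n-f$ \commit messages at line~\ref{line:n_decideCondition}. Together these observations reduce agreement to an invocation of the two invariants, completing the proof.
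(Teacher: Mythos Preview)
Your proposal is correct and follows essentially the same approach as the paper: both split into a same-view case handled by Lemma~\ref{invariant:n_decideinview} and a cross-view case handled by Lemma~\ref{invariant:n_sameval}, arguing that $j$'s acceptance of a valid \prepare in view $v'>v$ forces the existence of a valid proof for $val'$, which the lemma rules out unless $val'=val$. The paper phrases the cross-view case as a contradiction while you argue directly, but the content is identical. (Minor note: the macro \texttt{\textbackslash trydel} expands to \textit{try\_deliver}, not \textit{try\_decide}; write the latter explicitly.)
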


\begin{proof}
We distinguish two cases: (1) decision in the same view (2) decision in different views.

\textbf{Case 1:} decision in the same view. Follows from Lemma~\ref{invariant:n_decideinview}.

\textbf{Case 2:} decision in different views. 
By contradiction. Let $i,j$ be two correct processes. Assume processes $i$ and $j$ decide two different values, \textit{val}, respectively \textit{val$'$}, in views \textit{v}, respectively \textit{v$'$}. Let \textit{v $<$ v$'$} \textit{wlog}.

To decide, a correct process must receive a valid \prepare message and $n-f$ \commit messages for the same estimate, line~\ref{line:n_decideCondition}. 
When $i$ decides \textit{val} in view \textit{v}, by Lemma~\ref{invariant:n_sameval}, from view \textit{v+1} onward, the only valid \textit{proof} supports estimate \textit{val}. Hence, a valid \prepare message can only contain an estimate for \textit{val} and at any view-change procedure, no \vc supporting \textit{val$'$} is able to form a view-change certificate.
Given process $j$ only accepts valid \prepare messages, $j$ cannot adopt \textit{val$'$} as its auxiliary, \textit{aux$_j$}. This means $j$ cannot decide \textit{val$'$}. 
Given process $j$ only collects a set of (non-conflicting) view-change certificates, $j$ cannot adopt \textit{val$'$} as its estimate \textit{est$_j$}.
\end{proof}




\subsection{Integrity}
\begin{theorem}[Integrity]
No correct process decides twice.
\end{theorem}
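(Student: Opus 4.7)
The plan is to argue directly from the structure of the \texttt{try\_decide} procedure, since \texttt{decide(val)} is only ever invoked from within \texttt{try\_decide} (line~\ref{line:n_finalDecide}). The key observation is that a correct process guards each call to \texttt{decide} with the local boolean flag \textit{decided\textsubscript{i}}, which is initialized to \texttt{False} at the beginning of \texttt{propose}. Thus the entire argument reduces to showing that once \textit{decided\textsubscript{i}} becomes \texttt{True}, it never reverts to \texttt{False}, and that the flag is always updated \emph{before} the call to \texttt{decide}.

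First, I would note that \textit{decided\textsubscript{i}} is written only inside \texttt{try\_decide}, where the only assignment sets it to \texttt{True}. No other line in the algorithm touches this variable, and since $i$ is correct it faithfully follows the protocol; hence \textit{decided\textsubscript{i}} is monotonic. Second, inside \texttt{try\_decide}, the guard \texttt{if not decided\textsubscript{i}} is evaluated, and only then is the flag set to \texttt{True} and \texttt{decide(val)} invoked. Because these two steps are executed sequentially by a correct process with no intervening opportunity for another invocation of \texttt{try\_decide} to race (the algorithm is presented in a sequential per-process style), any subsequent call to \texttt{try\_decide} observes \textit{decided\textsubscript{i} = True} and skips the \texttt{decide} call.

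Putting these two facts together, \texttt{decide} is invoked at most once by a correct process, which is exactly the integrity property. I do not expect any genuine obstacle here: unlike agreement (which required the intricate cross-view reasoning of Lemmas~\ref{invariant:n_sameview}--\ref{invariant:n_sameval}), integrity is immediate from the local bookkeeping of \textit{decided\textsubscript{i}} and requires no reasoning about Byzantine processes or about the broadcast primitive. The only subtlety worth stating explicitly is that the flag is updated \emph{before} \texttt{decide} is called, which rules out a re-entry scenario; but even this is evident from reading the two-line body of \texttt{try\_decide}.
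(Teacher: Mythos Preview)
Your proposal is correct and follows essentially the same approach as the paper: both argue that the boolean flag \textit{decided\textsubscript{i}} guards the call to \texttt{decide} inside \texttt{try\_decide}, so that after the first decision the guard fails on every subsequent invocation. Your write-up is more detailed (noting monotonicity of the flag and the absence of races), but the core reasoning is identical to the paper's two-sentence proof.
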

\begin{proof}
A correct process may call \textit{try\_decide} (line~\ref{line:n_decision}) multiple times. Yet, once a correct process calls \textit{decide} (line~\ref{line:n_finalDecide}), the \textit{decided} variable is set to \textit{true} and hence the if statement is never entered again.
\end{proof}

\subsection{Validity}
\begin{theorem}[Weak validity]
If all processes are correct and some process decides \textit{val}, then \textit{val} is the input of some process.
\end{theorem}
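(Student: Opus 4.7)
The plan is to establish, by induction on the view number $v$, the invariant that every value appearing as the estimate in any $\langle\textsc{Prepare}, v, \textit{val}, \textit{proof}\rangle$ message broadcast by a correct primary is the input of some process. Since all processes are correct in this setting, no Byzantine primary can invent values, and consistency of \neb{} guarantees that any accepted prepare matches one actually broadcast.

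For the base case $v=0$, I would observe that the primary $p_0$ starts with $\text{est}_{p_0} = \bot$, so at line~\ref{line:n_inputVal} it sets $\text{init}_{p_0} = v_{p_0}$, its own input. Thus the only value ever carried in a view-0 prepare is the input of $p_0$. For the inductive step $v>0$, I would trace how $\text{est}_i$ is set: it is updated only at line~\ref{line:n_EstVC} to the value associated with the highest-view tuple in $\text{proof}_i$. Each such tuple is either still the initial $(0,\bot,\emptyset)$ or was set at line~\ref{line:vctuple} upon accepting a valid prepare in some earlier view $v' < v$; by the inductive hypothesis, the value in that tuple is some process's input. Hence either $\text{est}_i = \bot$, in which case the correct primary falls back to $v_i$ at line~\ref{line:n_inputVal}, or $\text{est}_i$ is an input from a previous view. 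In both cases, the value broadcast in the view-$v$ prepare is some process's input.

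To conclude, I would argue that any decision value equals $\text{aux}$ of some correct process at line~\ref{line:n_decideCondition}, and by line~\ref{line:auxupdate} this $\text{aux}$ equals $\textit{val}$ from an accepted valid prepare. By \neb{}'s integrity and consistency, this $\textit{val}$ was genuinely broadcast by the primary of that view, so the invariant applies and $\textit{val}$ is some process's input.

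The main obstacle I anticipate is bookkeeping around the case where the primary receives a view-change proof whose tuples are all still $(0,\bot,\emptyset)$: here the validity rules in Appendix~\ref{app:cons-validMsgs} permit any estimate, so I must verify that a \emph{correct} primary nevertheless selects only a value of the form $\text{est}_i$ or $v_i$, which line~\ref{line:n_inputVal} guarantees. The rest of the argument is routine given that no Byzantine processes are present and the view-change machinery is only needed to preserve the invariant across views.
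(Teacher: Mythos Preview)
Your proposal is correct and follows essentially the same approach as the paper: both argue by induction over views that any value carried in a \prepare{} message is either the primary's own input or inherited from an earlier view's \prepare{}, ultimately tracing every decided value back to some process's input. Your presentation is a forward induction establishing an explicit invariant, while the paper phrases it as a backward trace from the decision point, but the underlying argument is the same.
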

\begin{proof}
Assume a correct process decides \textit{val}. Following the steps in the algorithm, a correct process only decides a value for which it receives a valid \prepare message and $n-f$ \commit messages, in the same view (line~\ref{line:n_decideCondition}). It is either the case the value in the \prepare message comes from the previous view or it is the input value of the current view's primary (line~\ref{line:n_inputVal}). For the latter, validity is satisfied. For the former, the value in the previous view must come from one of the \vc messages. Which is either an input value of a prior view's primary or the value of a previous view message. 
We continue by applying the same argument inductively, backward in the sequence of views, until we reach a view in which the value was the input value of a primary. This shows that \textit{val} was proposed by the primary in some view.
\end{proof}

\subsection{Termination}
\begin{lemma}\label{lemma:vc_non-conflicting}
Two correct processes cannot send conflicting \vc messages.
\end{lemma}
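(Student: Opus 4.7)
The plan is to derive this lemma as an almost immediate corollary of Lemma~\ref{invariant:n_sameview}, after unpacking the definitions.

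First, I would pin down the notion of ``conflicting'' \vc messages. The paper introduces conflict between view-change \emph{certificates} as: their tuples carry different estimates ($\neq \bot$) with valid proofs for the same view number. I would adopt the analogous definition for \vc messages themselves: two \vc messages conflict if their view-change tuples $(v, val, proof_{val})$ and $(v, val', proof_{val'})$ share the same view $v$ but have $val \neq val'$ with $val, val' \neq \bot$ and both proofs valid.

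Next, I would characterize how a correct process's view-change tuple gets its contents. Inspecting the algorithm, a correct process $i$ only updates $vc_i$ at line~\ref{line:vctuple}, i.e., right after $i$ accepts a valid \prepare message in its current view. Thus if $i$ ever sends a \vc message whose tuple is $(v, val, proof_{val})$ with $val \neq \bot$, it must be that $i$ accepted a valid $\langle$\prepare, $v, val, proof_{val}\rangle$ from the primary of view $v$, and analogously the $\bot$-entries arise only when no valid \prepare was ever accepted in any view so far.

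Finally, I would close by contradiction. Assume correct processes $i$ and $j$ send conflicting \vc messages with tuples $(v, val, \cdot)$ and $(v, val', \cdot)$ where $val \neq val'$ and both differ from $\bot$. By the previous step, $i$ accepted a valid \prepare for $val$ in view $v$ and $j$ accepted a valid \prepare for $val'$ in view $v$. This directly contradicts Lemma~\ref{invariant:n_sameview}, which says no two correct processes accept \prepare messages for different values in the same view.

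I do not expect any real obstacle: the content of this lemma is essentially the contrapositive of Lemma~\ref{invariant:n_sameview}, lifted through one definitional step (from ``acceptance of a \prepare'' to ``sending a \vc with the stored tuple''). The only care needed is to make sure the definition of conflicting \vc messages matches the one used later when building view-change certificates, so that this lemma plugs cleanly into subsequent termination arguments.
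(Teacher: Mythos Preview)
Your proposal is correct and follows essentially the same approach as the paper: both argue by contradiction, unpack the conflicting tuples to conclude that the two correct processes accepted \prepare messages for different values in the same view, and then invoke Lemma~\ref{invariant:n_sameview}. The paper's proof is slightly terser (it does not spell out the definition of conflict or the update mechanism for $vc_i$ as explicitly as you do), but the logical content is identical.
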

\begin{proof}
Assume the contrary and let \textit{v} be the earliest view in which correct processes $i$ and $j$ send conflicting \vc messages $m_1$ and $m_2$, respectively. Let \textit{vc$_i$ $=$ (view$_i$, val$_i$, proof$_i$)} and \textit{vc$_j$ $=$ (view$_j$, val$_j$, proof$_j$)} be the view-change tuples in $m_1$ and $m_2$, respectively. Since $m_1$ and $m_2$ conflict, it must be the case that \textit{view$_i$ $=$ view$_j$} and \textit{$\bot\ne $ val$_i\ne$ val$_j \ne \bot$}. 
Thus, in view \textit{view$_i$ $=$ view$_j$}, $i$ and $j$ must have received and accepted \prepare messages for different values \textit{val$_i$} and \textit{val$_j$}. This contradicts Lemma~\ref{invariant:n_sameview}.
\end{proof}

\begin{lemma}\label{lemma:always_valid}
A \prepare, \commit or \vc message from a correct process is considered valid by any correct process.
\end{lemma}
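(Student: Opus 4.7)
The plan is to verify each validity condition from Appendix~\ref{app:cons-validMsgs} by a direct case analysis on the message type, tracing the algorithm at a correct broadcaster $c$.

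For \prepare: a correct $c$ invokes $\t{cb-broadcast}(\langle\textsc{Prepare}, view_c, init_c, proof_c\rangle)$ only when $p_c = c$ (line~\ref{line:coord}), so $c$ is the primary of $view_c$, and $c$ broadcasts at most one \prepare per view because it only re-enters Phase~1 after incrementing $view_c$ (line~\ref{line:n_viewChange}). In view~$0$, $proof_c = \emptyset$ and $init_c \ne \bot$ is an input, matching the view-$0$ clause. In view $v>0$, $c$ reached this view only after assembling $n-f$ non-conflicting view-change certificates at line~\ref{line:acceptVC}, stored them in $proof_c$ (line~\ref{line:proof}), and chose $est_c$ to be the value associated with the highest-view tuple (line~\ref{line:n_EstVC}); thus the estimate matches the highest-view tuple of a valid proof.

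For \commit: a correct $c$ broadcasts $\langle\textsc{Commit}, view_c, aux_c\rangle$ exactly once per view at line~\ref{line:ph2}, with $aux_c$ equal to the value of the accepted \prepare (line~\ref{line:auxupdate}) or $\bot$ (line~\ref{line:auxbot}). Before $c$ ever reaches Phase~3 of any view $v'>view_c$, it must complete Phase~2 of $view_c$, so it cannot have previously broadcast a \vc for a higher view; and since it broadcasts a single \commit per view, it never sends two different \commit messages in the same view. All conditions hold.

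For \vc: $c$ sends $\langle\textsc{ViewChange}, view_c{+}1, vc_c\rangle$ at line~\ref{line:vc} where $vc_c$ was last updated at line~\ref{line:vctuple} the most recent time $c$ accepted a valid \prepare; hence $val \in vc_c$ is exactly the $aux_c$ that $c$ last broadcast in a non-empty \commit, and $view_{val}$ equals the view of that \commit. If $c$ has only ever sent empty \commit messages, $vc_c$ still holds its initial value $(0,\bot,\emptyset)$, matching the corresponding validity clause. The $proof_{val}$ carried in $vc_c$ was precisely the $proof$ of the \prepare $c$ accepted, which $c$ verified to be valid at that time; validity of a proof (a fixed set of certificates) is time-independent, so it remains valid now. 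Finally, $c$ broadcasts a \vc at most once per view (a new \vc requires re-entering the loop and moving to a fresh view) and sends exactly one \commit per earlier view, so the remaining bookkeeping conditions are satisfied.

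The potentially subtle point, and thus the main obstacle to check carefully, is the matching clause between $vc_c$ and the latest non-empty \commit: one must confirm that between the last update of $vc_c$ (at some view $\hat v \le view_c$) and the call at line~\ref{line:vc}, the correct process $c$ did not broadcast any other non-empty \commit. This follows because $vc_c$ is updated in every view in which $c$ accepts a valid \prepare, which is exactly the set of views in which $c$'s \commit carries a non-$\bot$ aux; in all intervening views $c$ broadcast $\bot$ at line~\ref{line:auxbot}. Once this invariant is made explicit, the three cases are routine and the lemma follows.
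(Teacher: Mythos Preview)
Your proposal is correct and follows essentially the same approach as the paper: a direct case analysis on the three message types, verifying each validity clause of Appendix~\ref{app:cons-validMsgs} by tracing the code of a correct sender. If anything, your treatment is slightly more explicit than the paper's (notably your invariant linking $vc_c$ to the latest non-$\bot$ \commit), but the structure and the key observations are the same.
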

\begin{proof}

A correct process $i$ only sends a \prepare message if it is the coordinator of that view (line~\ref{line:coord}).
When \textit{view} $=0$, \textit{est}$_i$ is initialized to $\bot$ which leads $i$ to set \textit{init}$_i$ to \textit{v}$_i$ (line~\ref{line:n_inputVal}). The \prepare message has the following format: $\langle$\prepare, $0$, \textit{v}$_i$, $\emptyset\rangle$ which matches the required specification for a valid \prepare.
When \textit{view} $>0$, any correct process updates its \textit{proof}$_i$ and \textit{est}$_i$ before increasing its \textit{view}$_i$ variable, i.e. moving to the next view.
A correct process would update these two vars according to the protocol, lines~\ref{line:proof} and~\ref{line:n_EstVC}.
As before, in case \textit{est}$_i = \bot$, $i$ to set \textit{init}$_i$ to \textit{v}$_i$, otherwise it carries \textit{est}$_i$ (line~\ref{line:n_inputVal}). The \prepare message has the following format: $\langle$\prepare, \textit{view$_i$, init$_i$, proof$_i$}$\rangle$ which matches the required specification for a valid \prepare.


A correct process $i$ broadcasts exactly one \commit message in \textit{view} (line~\ref{line:ph2}) after it either (a) hears from the coordinator of the current view or (b) starts suspecting the coordinator.
In case (a) $i$'s message contains the estimate of the coordinator (line~\ref{line:auxupdate}), while in case (b) it contains $\bot$ (line~\ref{line:auxbot}).
In any of the two cases, $i$'s \vc message strictly follows the \commit message (lines~\ref{line:vc} and~\ref{line:ph2}).
The behaviour is in-line with the specification.

A correct process $i$ broadcasts exactly a single \vc message in one \textit{view} (line~\ref{line:vc}) with its \textit{vc}$_i$. Process $i$ update its view-change tuple, \textit{vc}$_i$, only when it receives a valid \prepare message. Such message is ensured to be in accordance with the prior specifications for a valid \prepare message. Notice that a valid \prepare message cannot be $\bot$, and hence \textit{vc}$_i$ is either its initial value, $(0,\bot,\emptyset)$ or a valid tuple \textit{(view, val, proof)}.
The data in \textit{vc}$_i$ is updated at the same time \textit{aux}$_i$ is updated, upon receiving a valid \prepare, and these two variables indicate the same estimate (lines~\ref{line:auxupdate} and~\ref{line:vctuple}). The \textit{aux}$_i$ is then send via a \commit message within the same view (line~\ref{line:ph2}).
This ensures that the broadcast of $i$'s latest non-empty \commit corresponds to the data in its \textit{vc}$_i$ variable.
\end{proof}


Let $i$ be a correct process. For a given execution $E$, we denote by $\mathcal{V}(i)$ the set of views in which $i$ enters. We denote $v_{max}(i) = \max \mathcal{V}(i)$; by convention $v_{max}(i)=\infty$ if $\mathcal{V}(i)$ is unbounded from above.

\begin{lemma}\label{lemma:infinite_views}
For every correct process $i$, $v_{max}(i)=\infty$
\end{lemma}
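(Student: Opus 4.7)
The plan is to argue by contradiction: suppose $v^{*} := \min_i v_{\max}(i)$ over correct $i$ is finite, and let $S$ be the set of correct processes with $v_{\max}(i)=v^{*}$. Each $i \in S$ is stuck somewhere in view $v^{*}$. The first step is to rule out blocking at the earlier waits: line~\ref{line:n_acceptPhase1} and line~\ref{line:waituntil_phase2} both fall through on explicit timeouts, and by eventual synchrony (assuming, as usual, timeouts that grow monotonically with the view so that they exceed $\Delta$ after GST) these timeouts eventually fire. Hence any $i \in S$ must be stuck at line~\ref{line:acceptVC}, which means $i$ has already executed line~\ref{line:vc} and therefore cb-broadcast its own signed \vc{} message for view $v^{*}+1$.

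Next I would show that \emph{every} correct process, not just those in $S$, eventually cb-broadcasts a \vc{} message for view $v^{*}+1$. Any correct $j$ with $v_{\max}(j)>v^{*}$ has, by definition, entered view $v^{*}+1$ and thus executed line~\ref{line:vc} during view $v^{*}$. Any correct $j$ with $v_{\max}(j)=v^{*}$ is, by the argument above, stuck at line~\ref{line:acceptVC} and so also executed line~\ref{line:vc}. Since there are $n-f$ correct processes in total, we obtain $n-f$ correctly-generated \vc{} messages for view $v^{*}+1$, each valid by Lemma~\ref{lemma:always_valid}.

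Now I would use the validity of \neb{} to propagate certificates. Each correct $j$'s \vc{} for $v^{*}+1$ is eventually cb-delivered by every correct process, and upon delivery each correct process cb-broadcasts a signed \vcack{} for it (line~\ref{line:sendAck}); these \vcack{} messages are in turn cb-delivered to every correct process. Hence, for every correct broadcaster $j$, each $i \in S$ eventually assembles a view-change certificate consisting of $j$'s \vc{} and the $n-f-1$ \vcack{}s from the other correct processes. This yields $n-f$ such certificates at $i$, and by Lemma~\ref{lemma:vc_non-conflicting} no two certificates whose \vc{}s come from correct processes can conflict. Therefore $i$ passes the wait at line~\ref{line:acceptVC}, executes line~\ref{line:n_viewChange}, and enters view $v^{*}+1$, contradicting $v_{\max}(i)=v^{*}$.

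The main obstacle, as I see it, is the careful bookkeeping at line~\ref{line:acceptVC}: one must argue that the $n-f$ certificates delivered to $i$ are all both (a) \emph{valid}, which requires that each constituent \vc{} comes with a legitimate proof field and matches the broadcaster's last non-empty \commit{}---guaranteed by Lemma~\ref{lemma:always_valid} when the broadcaster is correct---and (b) \emph{non-conflicting}, which requires Lemma~\ref{lemma:vc_non-conflicting} applied to all pairs of correct broadcasters. A secondary subtlety is making the informal appeal to eventual synchrony precise for the Phase~1 and Phase~2 timeouts, which amounts to assuming a standard adaptive timeout schedule that the pseudocode only implicitly uses.
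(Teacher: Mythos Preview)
Your overall strategy matches the paper's: contradiction on the minimal $v_{\max}$, then rule out each of the three \texttt{wait until} points. However, your handling of line~\ref{line:waituntil_phase2} contains a genuine gap. That wait is a \emph{conjunction}: it requires both receipt of $n-f$ valid \commit{} messages \emph{and} the per-process receive-or-timeout clause. Timing out on every process satisfies only the second conjunct; a process can remain blocked there indefinitely if fewer than $n-f$ valid \commit{}s ever arrive. So your claim that line~\ref{line:waituntil_phase2} ``falls through on explicit timeouts'' is false, and the subsequent step---that every $j\in S$ is therefore stuck at line~\ref{line:acceptVC} and has already broadcast its \vc{}---is unsupported.

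The paper closes this gap exactly where you would need to: by minimality of $v^{*}$, every correct process reaches view $v^{*}$; each passes line~\ref{line:n_acceptPhase1} (via timeout or a valid \prepare{} from a correct primary) and therefore cb-broadcasts a \commit{} in $v^{*}$. By validity of \neb{} and Lemma~\ref{lemma:always_valid}, $i$ delivers and accepts $n-f$ such \commit{}s, so the first conjunct at line~\ref{line:waituntil_phase2} is eventually satisfied. Once you insert this argument, the remainder of your proof (the certificate assembly via \vc{}/\vcack{} and the appeal to Lemma~\ref{lemma:vc_non-conflicting}) is essentially the paper's. As a minor aside, your invocation of eventual synchrony for the Phase~1 and Phase~2 timeouts is unnecessary here: local timeouts fire regardless of synchrony; partial synchrony is only needed later, in the Termination theorem, to ensure messages arrive \emph{before} timeouts in some stable view.
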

\begin{proof}
Assume the contrary and let \textit{wlog} $i$ be the process with the lowest $v_{max}$. Since $i$ never progresses past view $v_{max}(i)$, $i$ must be blocked forever in one of the \textit{wait until} statements at lines~\ref{line:n_acceptPhase1}, \ref{line:waituntil_phase2}, or \ref{line:acceptVC}. We now examine each such case:
\begin{enumerate}
    \item Line~\ref{line:n_acceptPhase1}. If the primary $p$ of view $v_{max}(i)$ is faulty and does not broadcast a valid \prepare{} message, then eventually $i$ times out on the primary and progresses past the \textit{wait until} statement. If $p$ is correct, then $p$ eventually reaches view $v_{max}(i)$ and broadcasts a \prepare message $m$. By the validity property of Consistent Broadcast, $i$ eventually delivers $m$ from $p$. By Lemma~\ref{lemma:always_valid}, $i$ considers $m$ valid and thus progresses past the \textit{wait until} statement.
    \item Line~\ref{line:waituntil_phase2}. By our choice of $i$, every correct process must eventually reach view $v_{max}(i)$. Given the argument at item (1) above, no correct process can remain blocked forever at the \textit{wait until} statement in line~\ref{line:n_acceptPhase1}, thus every correct process eventually broadcasts a \commit message in view $v_{max}(i)$. By the validity property of Consistent Broadcast and by Lemma~\ref{lemma:always_valid}, $i$ eventually delivers all such messages and considers them valid. Therefore, $i$ must eventually deliver valid \prepare messages from $n-f$ processes and progress past the \textit{wait until} statement.
    \item Line~\ref{line:acceptVC}. By our choice of $i$, every correct process must eventually reach view $v_{max}(i)$. Given the argument at items (1) and (2) above, no correct process can remain blocked forever at the \textit{wait until} statements in lines~\ref{line:n_acceptPhase1} and \ref{line:waituntil_phase2}, thus every correct process eventually broadcasts a \vc message in view $v_{max}(i)$. By the validity property of Consistent Broadcast and by Lemma~\ref{lemma:always_valid}, every correct process eventually delivers all such \vc messages and considers them valid. Thus, for every \vc message $m$ sent by a correct process in view $v_{max}(i)$, every correct process eventually broadcasts a \vcack message $mAck$ with $m$'s digest; furthermore, $i$ receives and considers valid each such $mAck$. Thus, $i$ eventually gathers a set of $n-f$ view-change certificates in view $v_{max}(i)$, which are non-conflicting by Lemma~\ref{lemma:vc_non-conflicting}. This means that $i$ is eventually able to progress past the \textit{wait until} statement.
\end{enumerate}
We have shown that $i$ cannot remain blocked forever in view $v_{max}(i)$ in any of the \textit{wait until} statements. Thus, $i$ must eventually reach line~\ref{line:n_viewChange} and increase \textit{view}$_i$ to $v_{max}(i)+1$. We have reached a contradiction. 
\end{proof}

We define a view \textit{v} to be \textit{stable} if in \textit{v}: (1) the coordinator is correct and (2) no correct process times out on another correct process.



\begin{theorem}[Termination]
Eventually every correct process decides.
\end{theorem}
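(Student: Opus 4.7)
The plan is to combine eventual synchrony with Lemma~\ref{lemma:infinite_views} to pinpoint a view in which every correct process decides. First I would observe that, since after GST all communication and processing delays are bounded by $\Delta$ and views rotate primaries round-robin, infinitely many views satisfy both conditions of stability: the primary is correct, and timeouts---taken (as usual) to grow with the view number so that eventually they exceed the bound on any correct-to-correct round trip---are large enough that no correct process gives up on another correct process. Pick any such stable view $v$ that starts after GST and after timeouts have grown past this bound.

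Next I would show that in $v$ every correct process decides. By Lemma~\ref{lemma:infinite_views}, every correct process eventually enters $v$. Let $p$ be its correct primary. By Lemma~\ref{lemma:always_valid} together with validity of \neb{}, every correct process cb-delivers $p$'s \prepare{} message, which carries some value \textit{val}; by stability this happens before any correct process times out on $p$, so every correct process sets $\textit{aux}_i {=} \textit{val}$ and cb-broadcasts $\langle \textsc{Commit}, v, \textit{val}\rangle$. Reusing Lemma~\ref{lemma:always_valid} and \neb{}-validity, every correct process eventually cb-delivers all these $n{-}f$ \commit{} messages without timing out, so the \textit{wait until} at line~\ref{line:waituntil_phase2} returns with $n{-}f$ matching \commit{}s for \textit{val}, the check at line~\ref{line:n_decideCondition} succeeds, and every correct process calls \texttt{try\_decide(\textit{val})}. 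The \textit{decided}$_i$ flag ensures each process decides exactly once.

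The main obstacle I anticipate is justifying that correct processes actually reach the stable view $v$ instead of stalling in an earlier one. This is essentially the content of Lemma~\ref{lemma:infinite_views}, whose proof already demonstrates that all three \textit{wait until} statements eventually unblock in every view---including the one at line~\ref{line:acceptVC}, which closes a view-change by collecting $n{-}f$ view-change certificates that are non-conflicting by Lemma~\ref{lemma:vc_non-conflicting}. Hence every correct process keeps advancing and reaches $v$, at which point the argument of the preceding paragraph fires. Finally, to ensure that decisions made across different stable views (by processes that happen to reach different stable views) are consistent with a value \textit{val} decided in some earlier stable view, I would invoke Lemma~\ref{invariant:n_sameval}: from that point on the only value carried by any valid \prepare{}/view-change proof is \textit{val}, so the check at line~\ref{line:n_decideCondition} can only succeed for \textit{val}, closing the argument.
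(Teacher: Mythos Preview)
Your proposal is correct and follows essentially the same approach as the paper: use Lemma~\ref{lemma:infinite_views} to ensure all correct processes reach a post-GST view with a correct primary, then argue via Lemma~\ref{lemma:always_valid} and \neb{} validity that everyone delivers the same \prepare{} value, broadcasts and collects $n{-}f$ matching \commit{}s, and passes the check at line~\ref{line:n_decideCondition}. The final appeal to Lemma~\ref{invariant:n_sameval} is superfluous for termination (it belongs to agreement), but it does no harm.
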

\begin{proof}
We will show that every correct process eventually calls \textit{try\_decide}, which is sufficient to prove the result.
By our assumption of eventual synchrony, there is a time $T$ after which the system is synchronous. 
We can also assume that after $T$, no correct process times out on another process. 
Let $i$ be a correct process. 
Let $v^*$ be the earliest view such that: (1) $i$ enters $v^*$ after time $T$ and (2) the primary of $v^*$ is correct. 
Recall that by Lemma~\ref{lemma:infinite_views}, $i$ and all other correct processes are guaranteed to eventually reach view $v^*$. 
Let $p$ be the (correct) primary of $v^*$. By our choice of $v^*$, $p$ broadcasts a \prepare message $m$ in $v^*$, which is received and considered valid by all correct processes (by the validity property of Consistent Broadcast and Lemma~\ref{lemma:always_valid}). Thus all correct processes will set their \textit{aux} variable to the value \textit{val} contained in $m$, and broadcast a \commit message with \textit{val}. Process $i$ must eventually deliver these \commit messages and consider them valid, thus setting at least $n-f$ entries of $R_i$ to \textit{val} in line~\ref{line:set_Ri}. Therefore, the check at line~\ref{line:n_decideCondition} will succeed for $i$ and $i$ will call \textit{try\_decide} at line~\ref{line:n_decision}.
\end{proof}

\end{document}